\newtheorem{theorem}{Theorem}[section]
\newtheorem{lemma}[theorem]{Lemma}
\newtheorem{corollary}[theorem]{Corollary}
\newtheorem{remark}[theorem]{Remark}
\newenvironment{proof}[1][Proof]{\begin{trivlist}
\item[\hskip \labelsep {\bfseries #1}]}{\end{trivlist}}
\newcommand{\qed}{\nobreak \ifvmode \relax \else
      \ifdim\lastskip<1.5em \hskip-\lastskip
      \hskip1.5em plus0em minus0.5em \fi \nobreak
      \vrule height0.75em width0.5em depth0.25em\fi}
\begin{document}

\inserttype[ba0001]{article}
\renewcommand{\thefootnote}{\fnsymbol{footnote}}
\author{F. Ruggeri, Z. Sawlan, M. Scavino and R. Tempone }{
 \fnms{Fabrizio}
 \snm{Ruggeri} \!\!\!
 \footnotemark[1]\ead{fabrizio@mi.imati.cnr.it}\!\!
,
  \fnms{Zaid}
  \snm{Sawlan}\!
  \footnotemark[2]\ead{zaid.sawlan@kaust.edu.sa}\!\!
,
  \fnms{Marco}
  \snm{Scavino}\!
  \footnotemark[3]\ead{marco.scavino@kaust.edu.sa}
and
  \fnms{Raul}
  \snm{Tempone}\!
  \footnotemark[4]\ead{raul.tempone@kaust.edu.sa}
}

\title[Bayesian Inference for Linear Parabolic PDEs]{A hierarchical Bayesian setting for an inverse problem in linear parabolic PDEs with noisy boundary conditions}

\maketitle

\footnotetext[1]{
 CNR - IMATI, Consiglio Nazionale delle Ricerche, Milano, Italy
 \href{mailto:email1@example.com}{fabrizio@mi.imati.cnr.it}
}

\footnotetext[2]{
Corresponding author. CEMSE, King Abdullah University of Science and
Technology, Thuwal, 23955-6900, KSA
 \href{mailto:email1@example.com}{zaid.sawlan@kaust.edu.sa}
}

\footnotetext[3]{
CEMSE, King Abdullah University of Science and
Technology, Thuwal, 23955-6900, KSA
 \href{mailto:email1@example.com}{marco.scavino@kaust.edu.sa} , and Instituto de Estad\'{\i}stica (IESTA), Universidad de la Rep\'ublica, Montevideo, Uruguay
  \href{mailto:email2@example.com}{mscavino@iesta.edu.uy}
}

\footnotetext[4]{
CEMSE, King Abdullah University of Science and
Technology, Thuwal, 23955-6900, KSA
 \href{mailto:email1@example.com}{raul.tempone@kaust.edu.sa}
}

\renewcommand{\thefootnote}{\arabic{footnote}}

\begin{abstract}
In this work we develop a Bayesian setting to infer unknown parameters in initial-boundary value problems related to linear parabolic partial differential equations. We realistically assume that the boundary data are noisy, for a given prescribed initial condition. We show how to derive the joint likelihood function for the forward problem, given some measurements of the solution field subject to Gaussian noise. Given Gaussian priors for the time-dependent Dirichlet boundary values, we analytically marginalize the joint likelihood using the linearity of the equation. Our hierarchical Bayesian approach is fully implemented in an example that involves the heat equation. In this example, the thermal diffusivity is the unknown parameter. We assume that the thermal diffusivity parameter can be modeled a priori through a lognormal random variable or by means of a space-dependent stationary lognormal random field. Synthetic data are used to test the inference. We exploit the behavior of the non-normalized log posterior distribution of the thermal diffusivity. Then, we use the Laplace method to obtain an approximated Gaussian posterior and therefore avoid costly Markov Chain Monte Carlo computations. Expected information gains and predictive posterior densities for observable quantities are numerically estimated using Laplace approximation for different experimental setups.

\keywords{\kwd{Linear Parabolic PDEs},  \kwd{Noisy Boundary Parameters}, \kwd{Bayesian Inference}, \kwd{Heat Equation}, \kwd{Thermal Diffusivity}.}
\end{abstract}

\section{Introduction}

Parabolic partial differential equations model various important physical phenomena such as diffusion and heat transport. The solution of such equations propagates forward in time from an initial condition given boundary conditions and equation coefficients. In applications, some equation coefficients can be unknown quantities that need to be estimated. In addition, exact initial and boundary conditions might not be known. One possible approach to estimate these unknowns is to solve the inverse problem given some information about the solution. Classical inversion methods for parabolic partial differential equations are introduced by \cite{Samarskii} in the Chapter 8 of their book. 

In this work, we consider a Bayesian inversion problem to determine the coefficients of linear parabolic partial differential equations, under the assumption that noisy measurements are available in the interior of a domain of interest and for the unknown boundary conditions. A main novelty of our approach to solve the inverse problem relies on the assumption that the boundary parameters are unknown and modeled by means of adequate probability distributions. Subsequently, the contribution of the boundary parameters is marginalized out from the joint law with the unknown equation coefficients we want to infer, allowing the characterization of their posterior distribution. There are many advantages to a Bayesian approach. For example, it provides a solution along with a comprehensive measure of uncertainty given by the posterior distribution. Moreover, the prior available information can be easily incorporated in terms of elicited prior distributions \citep{Ghosh}. An important issue in this work is that Bayesian inversion is posed as a hierarchical process. Boundary conditions can therefore be treated as unknown parameters \citep{Kaipio}. Since we are only interested in estimating the equation coefficients, we eliminate those extra parameters by marginalization.

Bayesian inversion techniques for the heat equation have been discussed and implemented in some previous works. \cite{Kaipio} provided a general Bayesian framework for inverse problems in heat transfer, classified according to the dominant mode in heat transfer. The authors address many issues regarding forward problems and their statistical analysis. The prior modeling is extensively discussed, as well as how to deal, in particular, with different sources of uncertainties. Heat flux reconstruction problems have been studied by \cite{Wang1, Wang2}. When referring to the problem of parameter estimation in inverse heat conduction problems, Wang and Zabaras showed how to infer the thermal conductivity using a hierarchical Bayesian framework, on the basis of temperature readings within a conducting solid, assuming that the heat flux on the boundary and the heat source are known. They also explored the high dimensional posterior state space by means of Markov Chain Monte Carlo simulation. \cite{Lanz} estimated the thermal conductivity of a polymer, transforming the heat equation into a stochastic differential equation and considering the Euler-Maruyama approximation to get the likelihood, introducing latent observations in space and then using a relatively cumbersome Markov Chain Monte Carlo method. \cite{Fudym1} and \cite{Fudym2} addressed the estimation problem of the thermal diffusivity, as in the present work, which is a parameter that describes thermophysical property of materials. In their works a large number of temperature measurements is made by an infrared camera, with fine spatial resolution and high frequency. They solved one and two-dimensional forward problems for transient heat conduction, with spatially varying thermal conductivity and volumetric heat capacity, by finite differences, according to a nodal strategy. The parameter vector at each node is then estimated either by minimizing an a posteriori objective function when prior Gaussian distributions are assumed for the parameters, or by means of Markov Chain Monte Carlo methods for different prior distributions. 

This work is organized as follows.
In Section \ref{sec2}, we introduce the statistical setting and we derive the explicit form of the joint law of the unknown equation coefficients and the boundary parameters.
In Section \ref{sec3}, we use a finite element scheme in order to write the solution of the forward problem as a linear function of the boundary conditions. We demonstrate in Section \ref{sec4} that, under certain conditions, an exact marginalization can be carried out, yielding a closed formula for the marginal likelihood of the equation coefficients.
In Section \ref{sec5}, we apply our marginalization technique to  estimating thermal diffusivity in the one-dimensional heat equation in two cases given temperature simulated data. Numerical results are obtained for the non-normalized log posterior distribution of the thermal diffusivity. We model prior knowledge about the thermal diffusivity first as a lognormal random variable and then using a lognormal random field with a squared exponential (SE) covariance function \citep{Rasmussen}. In the first case, we use the Laplace method to provide an approximated Gaussian posterior distribution for the thermal diffusivity. Such method is then applied to obtain fast estimations of the information gain and the expected information gain under three experimental setups, and the predictive posterior mean of the temperature is also derived using the inferred thermal diffusivity. 
In the second case, where the thermal diffusivity is allowed to depend on the spatial variable, the Laplace approximation is used to obtain the posterior distribution of the hyperparameters that characterize the prior distribution for the thermal diffusivity.   

\section{Statistical setting and preliminary results}
\label{sec2}

In this section we introduce the statistical model associated to the 
forward initial-boundary value problems for linear parabolic partial differential equations. We then derive, under mild assumptions, the exact expression for the joint likelihood function of the unknown parameters in the parabolic equation and the unknown boundary parameters. \\

Consider the deterministic one-dimensional parabolic initial-boundary value problem:
\begin{equation}
\begin{cases}
\partial_t T + L_{\boldsymbol{\theta}}T = 0, & x \in (x_L, x_R), \, 0 < t \leqslant t_N < \infty \\
T(x_L,t) = T_L(t), & t \in [0,t_N] \\
T(x_R,t) = T_R(t), & t \in [0,t_N] \\
T(x,0) = g(x), & x \in(x_L, x_R)\,,
\end{cases}
\label{eq:main1}
\end{equation}

\noindent where $L_{\boldsymbol{\theta}}$ is a linear second-order partial differential operator that takes the form
\[ L_{\boldsymbol{\theta}}T = - \partial_x( a(x) \partial_x T) + b(x)\partial_x T + c(x)T,  \]
$\boldsymbol{\theta}(x) = (a(x),b(x),c(x))^{tr},$ and the partial differential operator $\partial_t + L_{\boldsymbol{\theta}}$ is  parabolic, because (\cite{Evans:1998}, [p.372]) there exists $\epsilon$ such that $a(x) \geqslant \epsilon >0$ for all $x \in (x_L, x_R)$. We also assume that
\begin{enumerate}
\item[P1] $a, b$ and $c$ are bounded functions.
\item[P2] $T_L, T_R$ and $g$ are square integrable functions.
\item[P3] The initial condition, $g$, is consistent with the boundary functions, namely $g(x_L) = T_L(0)$ and $g(x_R) = T_R(0)$.
\end{enumerate}
Then, under the assumptions P1-P3, there exists a unique weak solution of  \eqref{eq:main1} (\cite{Evans:1998}, [pp.375-377]).

Our main objective is to provide a Bayesian solution to an inverse problem for $\boldsymbol{\theta}$, where we assume that
\begin{enumerate}
\item[i]  $\boldsymbol{\theta}$ is unknown, while the initial condition $g$ in the initial-boundary value problem is known;
\item[ii] $\boldsymbol{\theta}$ is allowed to vary with the spatial variable $x$.
\end{enumerate}

\begin{remark}
In our Bayesian approach, we will assume later that the coefficient $a(x)$ is a lognormal random variable or lognormal random field. Therefore, $a(x)$ will not be bounded as assumed in P1.
However, it can be proved that there exists a unique solution of the stochastic parabolic initial-boundary value problem in the space $L^2(\Omega, H^{1})$. Such proof can be found in \cite{Charrier:2012} for elliptic boundary value problems but it can be also extended to parabolic initial-boundary value problems.
\end{remark}

Given noisy readings of the function $T(x,t)$ at the $I+1$ spatial locations, including the boundaries, $x_L=x_0, x_1, x_2, \ldots, x_{I-1}, x_I = x_R$,
at each of the $N$ times $t_1, t_2, \ldots, t_N$,
we want to infer $\boldsymbol{\theta}$ using a Bayesian approach. To determine the posterior distribution for $\boldsymbol{\theta}$, we need first to obtain the likelihood function of $\boldsymbol{\theta}$.
The remainder of this section derives the joint likelihood function of $\boldsymbol{\theta}$ and the boundary parameters. Let us introduce some convenient notation and assumptions: let $ \mathbf{Y_n} := (Y_{0,n}, \ldots, Y_{I,n})^{tr}$ denote the vector of observed readings at time $t_n$, and assume a statistical model with an additive Gaussian experimental noise $\boldsymbol{\epsilon_n}$; that is:
\begin{equation}
\mathop{\mathbf{Y_n}}\limits_{(I+1) \times 1} = \left[
\begin{array}{c}
T_{L}(t_n) \\
T(x_1,t_n) \\
\vdots \\
T(x_{I-1},t_n) \\
T_{R}(t_n)
\end{array} \right] + \boldsymbol{\epsilon_n}, \label{eq:addmodel}
\end{equation}

\noindent where $\boldsymbol{\epsilon_n} \stackrel{\scriptsize{\textrm{i.i.d.}}}{\sim} {\mathcal{N}}(\mathbf{0}_{I+1}, \sigma^2 \,\mathbf{I}_{I+1})$ for some measurement error variance $\sigma^2 > 0$. The covariance matrix of $\boldsymbol{\epsilon_n}$ is assumed equal to $\sigma^2 \,\mathbf{I}_{I+1}$ for simplicity, a general covariance matrix $\Sigma_{\boldsymbol{\epsilon_n}}$ could be used as well provided that the boundary measurement errors are independent from the interior measurement errors.

Also denote by $\mathop{\mathbf{Y_n^{I}}}\limits_{(I-1) \times 1} := (Y_{1,n}, \ldots, Y_{I-1,n})^{tr}$ the vector of observed data at the interior locations
$x_1, x_2, \ldots, x_{I-1}$ and let $\mathbf{Y_n^{B}} := (Y_{L,n}, Y_{R,n})^{tr}$ be the vector of observed data at the boundary locations $x_0,x_I$ at time $t_n.$\\
The density of $\mathbf{Y_n^{I}}$ is derived as it follows. First consider the time local problem, defined between consecutive measurement times, i.e.
\begin{equation}
\begin{cases}
\partial_t T + L_{\boldsymbol{\theta}} T = 0, & x \in (x_L, x_R),\: t_{n-1} < t \leqslant t_n,\\
T(x_L, t) =  T_L(t), & t \in [t_{n-1},t_n],\\
T(x_R, t) =  T_R(t), & t \in [t_{n-1},t_n],\\
T(x, t_{n-1}) = \widehat{T}(x, t_{n-1}), & x \in (x_L, x_R)\,,
\label{locprob}
\end{cases}
\end{equation}
whose exact solution, denoted by $\widehat{T}(\cdot,t_n)$, depends only on $\boldsymbol\theta, \widehat{T}(\cdot, t_{n-1})$ and the boundary values $\left\{ T_L(t), T_R(t) \right\}_{t \in (t_{n-1}, t_n)}$. Finally, use the form of the statistical model (\ref{eq:addmodel}) to obtain the form of the density of $\mathbf{Y_n^{I}}$.

\begin{lemma}
\label{lemma1}
Given the model (\ref{eq:addmodel}), the probability density function of $\mathbf{Y_n^{I}}$ is expressed as
\begin{equation}
\rho(\mathbf{Y_n^{I}} | \theta, \widehat{T}(\cdot, t_{n-1}), \left\{ T_L(t), T_R(t) \right\}_{t \in (t_{n-1}, t_n)} )
= \frac{1}{(\sqrt{2 \pi} \sigma)^{I-1}} \exp \left( - \frac{1}{2 \sigma^2} \left\| {\mathbf{R}}_{t_{n}}\right\|^{2}_{\ell^2} \right),
\label{eq:dens}
\end{equation}
where
$\mathop{{\mathbf{R}}_{t_n}}\limits_{(I-1) \times 1} := (\widehat{T}(x_1,t_n) -  Y_{1,n},\, \ldots,\, \widehat{T}(x_{I-1},t_n) - Y_{I-1,n})^{tr}$ denotes the data residual vector at time $t=t_n$.
\end{lemma}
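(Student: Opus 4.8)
The plan is to reduce the claim to the elementary fact that an affine image of a Gaussian random vector is Gaussian, the only real content being to argue that the mean vector is deterministic under the stated conditioning.

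First I would isolate the interior block of the observation model \eqref{eq:addmodel}. Writing $\boldsymbol{\epsilon_n^{I}}$ for the subvector of $\boldsymbol{\epsilon_n}$ indexed by the interior nodes $x_1,\dots,x_{I-1}$, the model reads
\[
\mathbf{Y_n^{I}} = \big( T(x_1,t_n),\dots,T(x_{I-1},t_n) \big)^{tr} + \boldsymbol{\epsilon_n^{I}}.
\]
Next I would argue that, conditionally on $\boldsymbol{\theta}$, on the state $\widehat{T}(\cdot,t_{n-1})$, and on the boundary traces $\left\{ T_L(t),T_R(t) \right\}_{t\in(t_{n-1},t_n)}$, the local initial-boundary value problem \eqref{locprob} is a deterministic, well-posed problem: under assumptions P1--P3, now applied on $(t_{n-1},t_n]$ with initial datum $\widehat{T}(\cdot,t_{n-1})$ (and, in the stochastic case, invoking the existence and uniqueness recalled in the Remark), it admits a unique weak solution, which is precisely $\widehat{T}(\cdot,t_n)$. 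Hence $T(x_i,t_n)=\widehat{T}(x_i,t_n)$ for $i=1,\dots,I-1$, so the mean vector $\widehat{\mathbf{T}} := (\widehat{T}(x_1,t_n),\dots,\widehat{T}(x_{I-1},t_n))^{tr}$ is a fixed, non-random vector given this information.

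Then I would bring in the Gaussian assumption. Since $\boldsymbol{\epsilon_n}\sim\mathcal{N}(\mathbf{0}_{I+1},\sigma^2\mathbf{I}_{I+1})$, every subvector is again a centered Gaussian vector whose covariance is the corresponding principal submatrix, so $\boldsymbol{\epsilon_n^{I}}\sim\mathcal{N}(\mathbf{0}_{I-1},\sigma^2\mathbf{I}_{I-1})$; moreover, because the interior measurement errors are independent of the boundary ones, this law is unaffected by also conditioning on the boundary data. Therefore $\mathbf{Y_n^{I}}$, conditionally on the listed quantities, is $\mathcal{N}(\widehat{\mathbf{T}},\sigma^2\mathbf{I}_{I-1})$. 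Writing out the multivariate normal density, the normalizing constant is $(2\pi\sigma^2)^{-(I-1)/2}=(\sqrt{2\pi}\,\sigma)^{-(I-1)}$ and the exponent is $-\frac{1}{2\sigma^2}\|\mathbf{Y_n^{I}}-\widehat{\mathbf{T}}\|_{\ell^2}^2$; since $\mathbf{Y_n^{I}}-\widehat{\mathbf{T}}=-\mathbf{R}_{t_n}$ has the same $\ell^2$ norm as $\mathbf{R}_{t_n}$, this is exactly \eqref{eq:dens}.

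I do not anticipate a genuine obstacle here; the one place that warrants care is the second step --- making rigorous the statement that, once $\boldsymbol{\theta}$, $\widehat{T}(\cdot,t_{n-1})$ and the boundary traces on $(t_{n-1},t_n)$ are fixed, the interior solution values at $t_n$ are deterministic --- which relies on well-posedness of the deterministic local problem \eqref{locprob} together with the independence of interior and boundary noise built into \eqref{eq:addmodel}.
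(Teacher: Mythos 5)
Your proposal is correct and follows essentially the same route the paper sketches in the paragraph preceding the lemma: fix the local problem \eqref{locprob} so that $\widehat{T}(\cdot,t_n)$ is deterministic given $\boldsymbol{\theta}$, $\widehat{T}(\cdot,t_{n-1})$ and the boundary traces, then read off the Gaussian density of the interior observations from the additive noise model \eqref{eq:addmodel}. Your write-up is in fact somewhat more careful than the paper's (explicitly handling the subvector of the noise, the independence from the boundary errors, and the sign of the residual), but there is no substantive difference in approach.
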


For illustration purposes and without loss of generality, assume now that the Dirichlet boundary condition functions, $T_L(\cdot)$ and $T_R(\cdot)$, are well approximated by piecewise linear continuous functions in the time  partition $\left\{t_n\right\}_{n=1,\ldots,N}$.\\
In this way, only $2N$ parameters, say $T_L(t_n)=T_{L,n}, \,T_R(t_n)=T_{R,n}\,, \:n=1,2, \ldots, N\,,$ suffice to determine the boundary conditions that are, in principle, infinite dimensional parameters. Let $LBC_n$ denote the time nodes that determine the local boundary conditions $\left\{ T_{L,n-1}, T_{L,n}, T_{R,n-1}, T_{R,n} \right\}\,.$ Other interpolation schemes may be used as well.

\begin{remark}
Given the discretized Dirichlet boundary conditions introduced above, we can say that $\widehat{T}(\cdot,t_n)$ depends only on $\boldsymbol\theta, T(\cdot, t_{n-1})$ and the boundary parameters $LBC_n$. Similarly, $\widehat{T}(\cdot,t_{n-1})$ depends on $\boldsymbol\theta, T(\cdot, t_{n-2})$ and the boundary parameters $LBC_{n-1}$. From this recursion, we can obtain 
\begin{equation}
\label{eq:rec}
\rho(\mathbf{Y_n^{I}} | \theta, g, \left\{LBC_j \right\}_{j=1,\ldots,n} ) = \rho(\mathbf{Y_n^{I}} |  \theta, \widehat{T}(\cdot, t_{n-1}), LBC_n ).
\end{equation}
Since the initial condition, $g$, is assumed to be known, it will be omitted in the rest of the paper. 
\end{remark}

\begin{lemma}
Given the model (\ref{eq:addmodel}) and Lemma \ref{lemma1}, the joint likelihood function of $\boldsymbol{\theta}$ and the boundary parameters $\left\{LBC_n\right\}_{n=1,\ldots,N}$ is given by
\begin{eqnarray}
&\,&\rho(\mathbf{Y_1}, \ldots, \mathbf{Y_N} | \boldsymbol{\theta}, \left\{LBC_n\right\}_{n=1,\ldots,N}) = \prod_{n=1}^{N} \frac{1}{(\sqrt{2 \pi} \sigma)^{I-1}} \exp \left( - \frac{1}{2 \sigma^2} \left\| {\mathbf{R}}_{t_{n}}\right\|^{2}_{\ell^2} \right) \nonumber \\
&\,& \times \frac{1}{\sqrt{2 \pi \sigma^2}}  \exp \left( - \frac{1}{2 \sigma^2} \left( T_{L,n} - Y_{L,n} \right)^2 \right)
\times \frac{1}{\sqrt{2 \pi \sigma^2}}  \exp \left( - \frac{1}{2 \sigma^2} \left( T_{R,n} - Y_{R,n} \right)^2 \right) . \label{eq:globlike}
\end{eqnarray}
\end{lemma}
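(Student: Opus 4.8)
The plan is to factorize the joint density of the full data set $(\mathbf{Y_1},\ldots,\mathbf{Y_N})$ by exploiting two structural facts. First, conditionally on $\boldsymbol{\theta}$ and the boundary parameters $\{LBC_n\}_{n=1,\ldots,N}$ (and the known initial condition $g$), the exact solution $\widehat{T}(\cdot,t_n)$ of the time-local problem \eqref{locprob} is a deterministic quantity for every $n$, so the only source of randomness in $\mathbf{Y_n}$ is the experimental noise $\boldsymbol{\epsilon_n}$. Second, the noises $\boldsymbol{\epsilon_1},\ldots,\boldsymbol{\epsilon_N}$ are i.i.d. and, within each $\boldsymbol{\epsilon_n}$, the covariance $\sigma^2\mathbf{I}_{I+1}$ makes the two boundary components $(\epsilon_{0,n},\epsilon_{I,n})$ independent of the interior components $(\epsilon_{1,n},\ldots,\epsilon_{I-1,n})$. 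Combining these, I would write
\[
\rho(\mathbf{Y_1},\ldots,\mathbf{Y_N}\mid\boldsymbol{\theta},\{LBC_n\}_{n=1,\ldots,N})
=\prod_{n=1}^{N}\rho(\mathbf{Y_n^{I}}\mid\boldsymbol{\theta},\{LBC_j\}_{j=1,\ldots,n})\;
\rho(\mathbf{Y_n^{B}}\mid LBC_n),
\]
where in the $n$-th interior factor only $\{LBC_j\}_{j\le n}$ survives from the full list of boundary parameters, since $\widehat{T}(\cdot,t_n)$ cannot depend on boundary values after time $t_n$.

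For each interior factor I would invoke the recursion \eqref{eq:rec}: because $\widehat{T}(\cdot,t_{n-1})$ is itself determined by $\boldsymbol{\theta}$, $g$ and $\{LBC_j\}_{j\le n-1}$, conditioning on $\boldsymbol{\theta}$ and $\{LBC_j\}_{j\le n}$ is equivalent to conditioning on $\boldsymbol{\theta}$, $\widehat{T}(\cdot,t_{n-1})$ and $LBC_n$, which is precisely the conditioning appearing in Lemma \ref{lemma1}. That lemma then supplies the explicit form $(\sqrt{2\pi}\sigma)^{-(I-1)}\exp\!\left(-\|\mathbf{R}_{t_n}\|_{\ell^2}^2/(2\sigma^2)\right)$ for each interior factor. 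For the boundary factors, from the model \eqref{eq:addmodel} the first and last components of $\mathbf{Y_n}$ read $Y_{L,n}=T_{L,n}+\epsilon_{0,n}$ and $Y_{R,n}=T_{R,n}+\epsilon_{I,n}$, with $\epsilon_{0,n}$ and $\epsilon_{I,n}$ independent $\mathcal{N}(0,\sigma^2)$; hence $\rho(\mathbf{Y_n^{B}}\mid LBC_n)$ is the product of the two univariate Gaussian densities $(2\pi\sigma^2)^{-1/2}\exp\!\left(-(T_{L,n}-Y_{L,n})^2/(2\sigma^2)\right)$ and $(2\pi\sigma^2)^{-1/2}\exp\!\left(-(T_{R,n}-Y_{R,n})^2/(2\sigma^2)\right)$. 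Substituting these two computations into the displayed factorization yields exactly \eqref{eq:globlike}.

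The main obstacle — essentially the only content beyond bookkeeping — is the rigorous justification of the product structure across the measurement times. Applying the chain rule gives conditional densities $\rho(\mathbf{Y_n}\mid\mathbf{Y_1},\ldots,\mathbf{Y_{n-1}},\boldsymbol{\theta},\{LBC_j\}_{j=1,\ldots,N})$, and one must argue that the past noisy readings $\mathbf{Y_1},\ldots,\mathbf{Y_{n-1}}$ drop out of the conditioning set. This is where the deterministic dependence of the local solutions on their data is crucial: by \eqref{eq:rec} the conditional law of $\mathbf{Y_n^{I}}$ depends on the past only through $\widehat{T}(\cdot,t_{n-1})$, which is a function of $\boldsymbol{\theta}$, $g$ and $\{LBC_j\}_{j\le n-1}$ and not of the noisy observations, while $\mathbf{Y_n^{B}}$ depends only on $LBC_n$ together with the fresh, independent noise $(\epsilon_{0,n},\epsilon_{I,n})$. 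I would therefore state this conditional-independence step explicitly first, after which the displayed factorization, Lemma \ref{lemma1}, and the elementary Gaussian boundary densities combine to complete the proof.
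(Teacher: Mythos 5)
Your proposal is correct and follows essentially the same route as the paper: a chain-rule factorization over measurement times justified by the conditional independence of $\mathbf{Y_n}$ from past observations given $\boldsymbol{\theta}$ and the boundary parameters (via the recursion \eqref{eq:rec}), followed by splitting each $\mathbf{Y_n}$ into its interior part, handled by Lemma \ref{lemma1}, and its boundary part, which is a product of two univariate Gaussian densities. Your explicit emphasis on why the past noisy readings drop out of the conditioning set matches the key step the paper states more tersely.
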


\begin{proof} 
Observe that $\mathbf{Y_n^{I}}, \mathbf{Y_n^{B}}$ are conditionally independent given
$\boldsymbol{\theta}, \widehat{T}(\cdot, t_{n-1}), LBC_n\,.$ Thus, we have
\begin{eqnarray*}
\rho(\mathbf{Y_n}| \boldsymbol{\theta}, \widehat{T}(\cdot, t_{n-1}), LBC_n) &=&
\rho(\mathbf{Y_n^{I}}, \mathbf{Y_n^{B}} | \boldsymbol{\theta}, \widehat{T}(\cdot, t_{n-1}), LBC_n),\\
&=& \rho(\mathbf{Y_n^{I}} | \boldsymbol{\theta}, \widehat{T}(\cdot, t_{n-1}), LBC_n ) \times \rho(\mathbf{Y_n^{B}} | \boldsymbol{\theta}, \widehat{T}(\cdot, t_{n-1}), LBC_n ),\\
&=& \rho(\mathbf{Y_n^{I}} | \boldsymbol{\theta}, \widehat{T}(\cdot, t_{n-1}), LBC_n) \times \rho(\mathbf{Y_n^{B}} | LBC_n)
\end{eqnarray*}
since $\mathbf{Y_n^{B}} | LBC_n$ does not depend on either $\boldsymbol{\theta}$ nor $\widehat{T}(\cdot, t_{n-1})$.

The joint likelihood function can then be written as
\[ \rho(\mathbf{Y_1}, \ldots, \mathbf{Y_N} | \boldsymbol{\theta}, \left\{LBC_n\right\}_{n=1,\ldots,N})  \]
\[ = \rho(\mathbf{Y_N} | \boldsymbol{\theta}, \left\{LBC_n\right\}_{n=1,\ldots,N}, \mathbf{Y_{N-1}}, \ldots, \mathbf{Y_1}) \times
\rho(\mathbf{Y_1}, \ldots, \mathbf{Y_{N-1}} | \boldsymbol{\theta}, \left\{LBC_n\right\}_{n=1,\ldots,N-1})  \]
(since $\mathbf{Y_N} | \boldsymbol{\theta}, \left\{LBC_n\right\}_{n=1,\ldots,N}$ is independent from $\mathbf{Y_1}, \ldots, \mathbf{Y_{N-1}}$)
\[ = \rho(\mathbf{Y_N} |  \boldsymbol{\theta}, \left\{LBC_n\right\}_{n=1,\ldots,N}) \times
\rho(\mathbf{Y_1}, \ldots, \mathbf{Y_{N-1}} | \boldsymbol{\theta}, \left\{LBC_n\right\}_{n=1,\ldots,N-1})  \]
(using equation \eqref{eq:rec})
\[ = \rho(\mathbf{Y_N} | \boldsymbol{\theta}, \widehat{T}(\cdot, t_{N-1}), LBC_N ) \times
\rho(\mathbf{Y_1}, \ldots, \mathbf{Y_{N-1}} | \boldsymbol{\theta}, \left\{LBC_n\right\}_{n=1,\ldots,N-1})   \]
(iterating the previous arguments)
\[ =  \prod_{n=1}^{N} \rho(\mathbf{Y_n} | \boldsymbol{\theta}, \widehat{T}(\cdot, t_{n-1}), LBC_n) =
\prod_{n=1}^{N}  \rho(\mathbf{Y_n^{I}} | \boldsymbol{\theta}, \widehat{T}(\cdot, t_{n-1}), LBC_n) \times \rho(\mathbf{Y_n^{B}} | LBC_n)\]
and finally, using \eqref{eq:dens}, the expression \eqref{eq:globlike} is obtained.
\qed
\end{proof}

\begin{remark}
A generalization of the joint likelihood can be obtained given serial correlations, that is, $\left\{\boldsymbol{\epsilon_n}\right\}_{n=1}^{N}$ are time correlated.
\end{remark}

Using the notation $\mathbf{T}_{L} = (T_{L,1}, \ldots, T_{L,N})^{tr}$, $\mathbf{T}_{R} = (T_{R,1}, \ldots, T_{R,N})^{tr}$, $\mathbf{Y}_{L} = (Y_{L,1}, \ldots, Y_{L,N})^{tr}$ and $\mathbf{Y}_{R} = (Y_{R,1}, \ldots, Y_{R,N})^{tr}$, the joint likelihood function \eqref{eq:globlike} can be written as
\begin{eqnarray}
&\!& \rho(\mathbf{Y_1}, \ldots, \mathbf{Y_N} | \boldsymbol{\theta}, \mathbf{T}_{L},\mathbf{T}_{R}) = (\sqrt{2 \pi} \sigma)^{- N (I+1)} \exp \left( - \frac{1}{2 \sigma^2}  \sum_{n=1}^{N} \left\| {\mathbf{R}}_{t_{n}}\right\|^{2}_{\ell^2} \right) \nonumber \\
&\!& \times \exp \left( - \frac{1}{2 \sigma^2} \left[ \left\| \mathbf{T}_{L} - \mathbf{Y}_{L} \right\|^{2}_{\ell^2} + \left\| \mathbf{T}_{R} - \mathbf{Y}_{R} \right\|^{2}_{\ell^2} \right] \right),
\label{eq:lik}
\end{eqnarray}
which is a suitable expression to derive later the marginal likelihood of $\boldsymbol{\theta}$.
The prior distributions for $\boldsymbol{\theta}$ and the boundary parameters $\mathbf{T}_{L}, \mathbf{T}_{R}$ can be specified in different ways. Generally speaking, the prior distribution for $\boldsymbol{\theta}$ should be proposed according to the physical properties described by the unknown parameters and taking into account available prior knowledge about the observed phenomena. As an example, it is known that the thermal conductivity of polymethyl methacrylate (plexiglas) is in the range 0.167-0.25 W/(mK). A uniform prior distribution for the thermal conductivity could be chosen if there is no preference among the values of the interval.\\
As per the boundary parameters, we may assume independent Gaussian prior distributions $T_{L,n} \sim {\mathcal{N}}(\mu_{L,n}, \sigma_p^2)$, $T_{R,n} \sim {\mathcal{N}}(\mu_{R,n}, \sigma_p^2), \:\: n=1, \ldots, N\,,$ that are centered at a least square spline fit, say $\mathop{\boldsymbol{\mu}_L}\limits_{N \times 1} = (\mu_{L,1}, \ldots, \mu_{L,N})^{tr}$,
$\mathop{\boldsymbol{\mu}_R}\limits_{N \times 1} = (\mu_{R,1}, \ldots, \mu_{R,N})^{tr}$, of the observed data, $\mathbf{Y}_L$ and $\mathbf{Y}_R$, respectively, and for some prior variance $\sigma_p^2 >0$.

%In the case where the experimental noise due to the boundary parameters is small, we may also consider a marginalization procedure. In this way we may simplify the approximation of the likelihood function of $\boldsymbol{\theta}$.

We claim that the data residual vector $\mathbf{R}_{t_n}$ can be written as a linear function of $\mathbf{T}_L$ and $\mathbf{T}_R$. The next section is devoted to the proof of this basic result. This proof allows for the exact marginalization of the contribution of the nuisance boundary parameters from the joint likelihood function (\ref{eq:lik}).

%In other words, there exist $A_{n},A_{L,n}$ and $A_{R,n}$ such that:
%\begin{equation}
%\mathbf{R}_{t_{n}} = \left( A_{n}(\boldsymbol{\theta}) \mathbf{T}_{0} - \mathbf{Y_n^{I}} \right) + A_{L,n}(\boldsymbol{\theta})\mathbf{T}_L +
%A_{R,n}(\boldsymbol{\theta})\mathbf{T}_R.
%\label{eq:res}
%\end{equation}

\section{Numerical Approximation}
\label{sec3}

In this section, our goal is to approximate the residual vector $\mathbf{R}_{t_{n}}$ as a linear function of the boundary conditions. After reformulating the main problem \eqref{eq:main1}, as described in the next lemma, we will introduce its weak form and finite element method will be then applied to provide a numerical approximation of the solution of the weak problem.
\begin{lemma}
The solution of \eqref{eq:main1} can be written in the form
\begin{equation}
T(x,t) = T_L(t) \frac{x_R-x}{x_R - x_L} + T_R(t) \frac{x-x_L}{x_R - x_L} + u(x,t),
\label{eq:lemma}
\end{equation}
where $u$ solves a new initial-boundary value problem with homogeneous Dirichlet boundary conditions:
\begin{equation}
\begin{cases}
\partial_t u + L_{\boldsymbol{\theta}}u = f(x,t), & x \in (x_L, x_R), 0 < t \leqslant t_N < \infty \\
u(x_L,t) = 0, & t \in [0,t_N] \\
u(x_R,t) = 0, & t \in [0,t_N] \\
u(x,0) = g^{0}(x), & x \in(x_L, x_R)
\end{cases}
\label{eq:hom}
\end{equation}
and \[ f(x,t) =  - \left(\partial_t + L_{\boldsymbol{\theta}}\right) T_L(t) \frac{x_R - x}{x_R - x_L} - \left( \partial_t + L_{\boldsymbol{\theta}}\right) T_R(t) \frac{x-x_L}{x_R - x_L} \,,\]
\[ g^{0}(x) = g(x)-T_L(0) \frac{x_R-x}{x_R - x_L} - T_R(0) \frac{x-x_L}{x_R - x_L} \,. \]
\end{lemma}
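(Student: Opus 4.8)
The plan is to use a standard Dirichlet-lifting (boundary homogenisation) argument combined with the linearity of the operator $\partial_t + L_{\boldsymbol{\theta}}$. First I would introduce the affine-in-$x$ interpolant of the boundary data,
\[
w(x,t) := T_L(t)\,\frac{x_R-x}{x_R-x_L} + T_R(t)\,\frac{x-x_L}{x_R-x_L},
\]
and note that by construction $w(x_L,t) = T_L(t)$ and $w(x_R,t) = T_R(t)$ for all $t \in [0,t_N]$, while $w(x,0) = T_L(0)\frac{x_R-x}{x_R-x_L} + T_R(0)\frac{x-x_L}{x_R-x_L}$. Then I would set $u := T - w$, with $T$ the unique weak solution of \eqref{eq:main1}, and verify the three defining relations of \eqref{eq:hom}: the boundary values of $u$ vanish because those of $T$ and $w$ agree; the initial value is $u(x,0) = g(x) - w(x,0) = g^{0}(x)$; and, applying $\partial_t + L_{\boldsymbol{\theta}}$ to $u = T-w$ and using that $T$ solves the homogeneous equation, one gets $\partial_t u + L_{\boldsymbol{\theta}} u = -(\partial_t + L_{\boldsymbol{\theta}})w$. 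Expanding the right-hand side by linearity and grouping the $T_L$- and $T_R$-terms separately yields precisely the stated source $f(x,t)$.

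For the converse, if $u$ solves \eqref{eq:hom} then $T := w+u$ satisfies \eqref{eq:main1}; since the latter has a unique weak solution under P1--P3 (see \cite{Evans:1998}), the representation \eqref{eq:lemma} follows, and in particular $u = T - w$ inherits whatever regularity $T$ and $w$ possess.

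I expect the only genuine subtlety to be one of regularity, i.e.\ of reading the identities above in the weak sense. Since $a,b,c$ are only bounded (and, in the intended application, $a$ is a lognormal field), the term $-\partial_x(a(x)\,\partial_x w)$ in $L_{\boldsymbol{\theta}}w$ must not be expanded by the product rule; instead one keeps it in divergence form and observes that $\partial_x w(\cdot,t) = (T_R(t)-T_L(t))/(x_R-x_L)$ is constant in $x$, so that $a(x)\,\partial_x w(\cdot,t) \in L^2(x_L,x_R)$ and $f(\cdot,t)$ is naturally an element of $H^{-1}(x_L,x_R)$; the $\partial_t w$ contribution is controlled in $L^2$ using the piecewise-linear-in-time structure assumed for $T_L,T_R$, which makes $\dot T_L,\dot T_R$ bounded. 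With these spaces fixed, the problem \eqref{eq:hom} is well posed and the identity $\partial_t u + L_{\boldsymbol{\theta}} u = f$ is nothing but the difference of the weak formulations of \eqref{eq:main1} and of the trivial problem solved by $w$, tested against arbitrary $H^1_0(x_L,x_R)$ functions. Hence the real work is bookkeeping: pinning down the function spaces in which $f$ and $g^{0}$ live so that \eqref{eq:hom} makes sense, after which the claimed decomposition is immediate from linearity and uniqueness.
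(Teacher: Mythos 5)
Your proposal is correct and is exactly the standard Dirichlet-lifting argument the paper relies on (the paper states this lemma without proof): subtract the affine interpolant $w$ of the boundary data from $T$, then read off the homogeneous boundary values, the initial datum $g^0$, and the source $f=-(\partial_t+L_{\boldsymbol{\theta}})w$ by linearity. Your added care about keeping $L_{\boldsymbol{\theta}}w$ in divergence form and placing $f(\cdot,t)$ in $H^{-1}$ is a sensible refinement consistent with the weak formulation the paper introduces immediately afterwards.
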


%\begin{proof}
%Due to uniqueness, we check that $T(x,t)$ is a solution of \eqref{eq:main1}:
%\begin{eqnarray*}
%\left( \partial_t + L_{\boldsymbol{\theta}} \right) T &=& \left( \partial_t + L_{\boldsymbol{\theta}} \right) \left[T_L(t) \frac{x_R-x}{x_R - x_L} + T_R(t) \frac{x-x_L}{x_R - x_L} + u(x,t) \right] \\
%&=& \left( \partial_t + L_{\boldsymbol{\theta}} \right) T_L(t) \frac{x_R-x}{x_R - x_L} + \left( \partial_t + L_{\boldsymbol{\theta}} \right) T_R(t) \frac{x-x_L}{x_R - x_L} +
%\left( \partial_t + L_{\boldsymbol{\theta}} \right) u(x,t) \\
%& &(\mbox{by linearity of} \left( \partial_t + L_{\boldsymbol{\theta}} \right) )\\
%&=&  \left(\partial_t + L_{\boldsymbol{\theta}}\right) T_L(t) \frac{x_R - x}{x_R - x_L} + \left( \partial_t + L_{\boldsymbol{\theta}}\right) T_R(t) \frac{x-x_L}{x_R - x_L} + f(x,t)\\
%&=& 0\,.
%\end{eqnarray*}
%The initial and boundary conditions are satisfied immediately by definition.
%\qed
%\end{proof}

We now introduce the weak formulation of problem \eqref{eq:hom}.\\
Find $u(t) \in V = H_{0}^{1}(x_L,x_R)$, $t \in (0,t_N)$ such that:
\begin{equation}
\begin{cases}
\int_{x_L}^{x_R} \partial_t u(t) v dx +  B\left( u(t) ,v \right)  = \int_{x_L}^{x_R} f(t) v dx , \forall v \in V, \,t \in (0,t_N), \\
u(0) = g^0, \: x\in(x_L, x_R),
\end{cases}
\label{eq:weak}
\end{equation}
where $B(u,v) = \int_{x_L}^{x_R} [ a(x) \partial_x u \: \partial_x v + b(x) \partial_x u \:v + c(x) u v ]\,dx$ and $ H_{0}^{1}(x_L,x_R)$ is the closure of the space $C^1_c(x_L,x_R)$ of continuously differentiable functions with compact support on $(x_L,x_R)$ with respect to the $H^1$-norm (\cite{claes}, [p.149]).

Given a mesh $ x_L = x_0 < ... < k \Delta x = x_k < ... < I\Delta x = x_{I} = x_R$ of the spatial domain $(x_L, x_R)$, we apply the finite element method with piecewise linear functions (hat functions) $\{ \phi_k \}_{k=1}^{I-1}$ to approximate the weak solution $u(x,t)$ of \eqref{eq:hom} as linear combinations of the basis functions:
\begin{equation*}
u(x,t) \approx u_{\Delta x}(x,t) = \sum_{k=1}^{I-1} u_{k}(t) \phi_{k}(x)\,,\:\:0 < t \leqslant t_N\,,\\
\end{equation*}
and we get
\begin{equation}
\sum_{k=1}^{I-1} \partial_t  u_{k}(t) \int_{x_L}^{x_R} \phi_{k} \phi_{j} dx + \sum_{k=1}^{I-1} u_{k}(t) B\left( \phi_{k}, \phi_{j} \right)  = \int_{x_L}^{x_R} f(t) \phi_{j} dx , j = 1,...I-1, \,t \in (0,t_N)\,. \label{eq:weaktwo}
\end{equation}

Let $\mathbf{u}(t) = (u_{1}(t), \ldots, u_{I-1}(t))^{tr}$, then we can write the linear system of ODEs \eqref{eq:weaktwo} in a matrix form as:
\begin{eqnarray}
\label{eq:mat1}
M \partial_t \mathbf{u}(t) +  S_{\boldsymbol{\theta}} \mathbf{u}(t)  =\mathbf{f}(t), \,  \:\:0 < t \leqslant t_N \,,
\end{eqnarray}
where $M$ is the mass matrix, $S_{\boldsymbol{\theta}}$ is the stiffness matrix and  $\mathop{\mathbf{f}(t)}$ is the load vector.
 
We consider now a uniform time discretization of $(0,t_N)$ such that
$t_0 = 0 < ... < n \Delta t = t_n < ... < N \Delta t = t_N$ and denote $\mathop{\mathbf{u}(t_n)} = \mathop{\mathbf{u}_{n}}$ and $\mathop{\mathbf{f}(t_n)} = \mathop{\mathbf{f}_{n}}$. 
We apply the backward Euler method on equation \eqref{eq:mat1} to obtain the fully discrete analogue of \eqref{eq:weak} that takes the form
\begin{equation}
\begin{cases}
\left( M + \Delta t S_{\boldsymbol{\theta}} \right) \mathbf{u}_{n+1} = M \mathbf{u}_{n} + \Delta t \mathbf{f}_{n+1} \,, n = 0, \ldots, N-1, \\
M \mathbf{u}_{0} = \mathbf{g}^0,
\end{cases}
\end{equation}
where $\mathbf{g}^{0} = \left( \int_{x_L}^{x_R} g^0 \phi_{1} dx, \ldots, \int_{x_L}^{x_R} g^0 \phi_{I-1} dx \right)^{tr}$.

\begin{theorem}
\label{th:linear}
The approximation of the weak solution of \eqref{eq:hom} can be written as a linear function of the initial and boundary conditions:
\begin{equation}
\mathbf{u}_{n} = A_{n}(\boldsymbol{\theta}) \mathbf{u}_0 + \tilde{A}_{L,n}(\boldsymbol{\theta}) {\mathbf{T}}_L + \tilde{A}_{R,n}(\boldsymbol{\theta}) {\mathbf{T}}_{R},
\label{eq:linear}
\end{equation}
where
$\mathop{\mathbf{u}_{n}} = (u_{1,n}, \ldots, u_{I-1,n})^{tr}$,
$\mathop{\mathbf{T}_{L}} = (T_{L,1}, \ldots, T_{L,N})^{tr}$,
$\mathop{\mathbf{T}_{R}} = (T_{R,1}, \ldots, T_{R,N})^{tr}$, and 
the matrices $A_{n}(\boldsymbol{\theta}), \tilde{A}_{L,n}(\boldsymbol{\theta})$ and $\tilde{A}_{R,n}(\boldsymbol{\theta})$ 
are explicitly constructed in the proof.
\end{theorem}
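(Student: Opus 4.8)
The plan is to unroll the backward-Euler recursion
\[
\left( M + \Delta t\, S_{\boldsymbol{\theta}} \right) \mathbf{u}_{n+1} = M \mathbf{u}_{n} + \Delta t\, \mathbf{f}_{n+1}, \qquad M \mathbf{u}_0 = \mathbf{g}^0,
\]
and to track separately how the initial datum and each boundary node enters. First I would note that, since $a(x)\geqslant\epsilon>0$, the matrix $C_{\boldsymbol{\theta}} := M + \Delta t\, S_{\boldsymbol{\theta}}$ is invertible (for $\Delta t$ small enough the symmetric part is positive definite; more simply, $M$ is SPD and $\Delta t\, S_{\boldsymbol{\theta}}$ is a small perturbation, or one invokes coercivity of $B$), so the scheme can be rewritten as $\mathbf{u}_{n+1} = C_{\boldsymbol{\theta}}^{-1} M\, \mathbf{u}_n + \Delta t\, C_{\boldsymbol{\theta}}^{-1}\mathbf{f}_{n+1}$. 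Iterating gives
\[
\mathbf{u}_n = \left( C_{\boldsymbol{\theta}}^{-1} M \right)^{n} \mathbf{u}_0 + \Delta t \sum_{j=1}^{n} \left( C_{\boldsymbol{\theta}}^{-1} M \right)^{n-j} C_{\boldsymbol{\theta}}^{-1} \mathbf{f}_{j},
\]
which already isolates the coefficient $A_n(\boldsymbol{\theta}) := \left( C_{\boldsymbol{\theta}}^{-1} M \right)^{n}$ of $\mathbf{u}_0$.

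The key step is then to express each load vector $\mathbf{f}_j$ as a linear function of $\mathbf{T}_L$ and $\mathbf{T}_R$. Recall from the preceding lemma that
\[
f(x,t) = -\left(\partial_t + L_{\boldsymbol{\theta}}\right) T_L(t)\,\frac{x_R-x}{x_R-x_L} - \left(\partial_t + L_{\boldsymbol{\theta}}\right) T_R(t)\,\frac{x-x_L}{x_R-x_L},
\]
and that $T_L,T_R$ are piecewise linear in time on the nodes $\{t_n\}$, so $T_L(t)$ and $T_R(t)$ on the interval $(t_{j-1},t_j)$ are affine combinations of $T_{L,j-1},T_{L,j}$ (resp. $T_{R,j-1},T_{R,j}$), and $\partial_t T_L(t) = (T_{L,j}-T_{L,j-1})/\Delta t$ there. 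Plugging the hat-function basis into $\int f(t)\phi_k\,dx$ and evaluating at $t=t_j$ (or using the appropriate quadrature consistent with the scheme), each component of $\mathbf{f}_j$ becomes an explicit linear combination of the four numbers in $LBC_j$, with coefficients built from the integrals $\int \phi_k \frac{x_R-x}{x_R-x_L}dx$, $\int a(x)\,\partial_x\!\big(\frac{x_R-x}{x_R-x_L}\big)\partial_x\phi_k\,dx$, etc. Writing this as $\mathbf{f}_j = F_{L,j}(\boldsymbol{\theta})\,\mathbf{T}_L + F_{R,j}(\boldsymbol{\theta})\,\mathbf{T}_R$ for suitable $(I-1)\times N$ selection-and-weight matrices $F_{L,j},F_{R,j}$ (having nonzero columns only at indices $j-1$ and $j$), and substituting into the unrolled recursion, I collect
\[
\tilde{A}_{L,n}(\boldsymbol{\theta}) := \Delta t \sum_{j=1}^{n} \left( C_{\boldsymbol{\theta}}^{-1} M \right)^{n-j} C_{\boldsymbol{\theta}}^{-1} F_{L,j}(\boldsymbol{\theta}),
\]
and likewise $\tilde{A}_{R,n}(\boldsymbol{\theta})$, which is exactly the claimed form \eqref{eq:linear}. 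Finally, since $\mathbf{u}_0$ is determined by $g$ alone ($M\mathbf{u}_0=\mathbf{g}^0$ with $g^0$ depending on $g$ and $T_L(0),T_R(0)$, the latter being $T_{L,1},T_{R,1}$ up to interpolation — one may either fold these into $\tilde{A}_{L,n},\tilde{A}_{R,n}$ or absorb them into the known $\mathbf{u}_0$ term since $g$ is known), the representation has the stated structure.

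The main obstacle, and where care is needed, is bookkeeping rather than anything deep: correctly handling the time-derivative term $\partial_t T_L$ in $f$ under the piecewise-linear interpolation (it contributes a term proportional to $(T_{L,j}-T_{L,j-1})/\Delta t$, i.e.\ with coefficients $\pm 1/\Delta t$ at two adjacent nodes), making sure the spatial integrals involving $L_{\boldsymbol{\theta}}$ applied to the affine lift $\frac{x_R-x}{x_R-x_L}$ are computed consistently with the finite element discretization used for $u$, and verifying that the boundary value $T_L(0)$ appearing in $g^0$ is consistent with the time grid (assumption P3 guarantees $g(x_L)=T_L(0)$). Once those are pinned down, the linearity is immediate because every operation — inverting $C_{\boldsymbol{\theta}}$, multiplying by $M$, summing — is linear in the data $(\mathbf{u}_0,\mathbf{T}_L,\mathbf{T}_R)$, and the matrices $A_n,\tilde{A}_{L,n},\tilde{A}_{R,n}$ depend on $\boldsymbol{\theta}$ only through $M$, $S_{\boldsymbol{\theta}}$ and the load-assembly matrices.
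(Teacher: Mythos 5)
Your proposal is correct and follows essentially the same route as the paper's Appendix A: invert $M+\Delta t\,S_{\boldsymbol{\theta}}$, unroll the backward-Euler recursion into a discrete Duhamel formula, and use the piecewise-linear-in-time boundary interpolation to write each load vector as a two-column selection matrix acting on $\mathbf{T}_L$ and $\mathbf{T}_R$, yielding exactly the paper's $A_n=\mathbf{B}^n$ and $\tilde{A}_{L,n}=\sum_k \mathbf{B}^{n-k}(M+\Delta t S_{\boldsymbol{\theta}})^{-1}\mathbf{F}_{L,k}$. The only differences are bookkeeping conventions (where the factor $\Delta t$ is absorbed relative to the $1/\Delta t$ from $\partial_t T_L$).
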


\begin{proof}
See Appendix A (\ref{appA}).
\end{proof}

\begin{theorem}
\label{th:linear2}
The approximation of the weak solution of \eqref{eq:main1} can be written as a linear function of the initial and boundary conditions:
\begin{equation}
\mathbf{T}_{n} = \mathbf{B}^{n} \mathbf{T}_0 + A_{L,n}(\boldsymbol{\theta}) {\mathbf{T}}_L + A_{R,n}(\boldsymbol{\theta}) {\mathbf{T}}_{R}.
\label{eq:mainsol}
\end{equation}
where $\mathbf{T}_{n}$ is defined similarly to $\mathbf{u}_{n}$ and the matrices $\mathbf{B}, A_{L,n}(\boldsymbol{\theta})$ and $A_{R,n}(\boldsymbol{\theta})$ are explicitly constructed in the proof.
\end{theorem}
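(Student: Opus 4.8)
The plan is to obtain Theorem~\ref{th:linear2} from Theorem~\ref{th:linear} by adding back the boundary lifting that was subtracted off in the decomposition \eqref{eq:lemma}. That is, \eqref{eq:linear} already accounts for the ``homogeneous part'' $\mathbf{u}_n$; all that remains is to evaluate the affine interpolant $T_L(t)\tfrac{x_R-x}{x_R-x_L}+T_R(t)\tfrac{x-x_L}{x_R-x_L}$ at the interior mesh nodes, to re-express the initial term $\mathbf{u}_0$ through $\mathbf{T}_0$, and to collect coefficients. Since every step is linear, the affine structure \eqref{eq:mainsol} is guaranteed; the work is in identifying the three matrices explicitly.

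Concretely, I would first evaluate \eqref{eq:lemma} at the interior nodes $x_1,\dots,x_{I-1}$ and at a measurement time $t_n$. Introducing the fixed geometric vectors $\boldsymbol{\ell}_L=\bigl(\tfrac{x_R-x_1}{x_R-x_L},\dots,\tfrac{x_R-x_{I-1}}{x_R-x_L}\bigr)^{tr}$ and $\boldsymbol{\ell}_R=\bigl(\tfrac{x_1-x_L}{x_R-x_L},\dots,\tfrac{x_{I-1}-x_L}{x_R-x_L}\bigr)^{tr}$ in $\mathbb{R}^{I-1}$, and the $n$-th canonical basis vector $\mathbf{e}_n\in\mathbb{R}^N$, this reads $\mathbf{T}_n=\boldsymbol{\ell}_L\,\mathbf{e}_n^{tr}\mathbf{T}_L+\boldsymbol{\ell}_R\,\mathbf{e}_n^{tr}\mathbf{T}_R+\mathbf{u}_n$, using $T_{L,n}=\mathbf{e}_n^{tr}\mathbf{T}_L$ and $T_{R,n}=\mathbf{e}_n^{tr}\mathbf{T}_R$. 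Substituting \eqref{eq:linear} for $\mathbf{u}_n$ and collecting terms gives
\[
\mathbf{T}_n = A_n(\boldsymbol{\theta})\,\mathbf{u}_0 + \bigl(\boldsymbol{\ell}_L\mathbf{e}_n^{tr}+\tilde A_{L,n}(\boldsymbol{\theta})\bigr)\mathbf{T}_L + \bigl(\boldsymbol{\ell}_R\mathbf{e}_n^{tr}+\tilde A_{R,n}(\boldsymbol{\theta})\bigr)\mathbf{T}_R ,
\]
from which one reads off $A_{L,n}(\boldsymbol{\theta}):=\boldsymbol{\ell}_L\mathbf{e}_n^{tr}+\tilde A_{L,n}(\boldsymbol{\theta})$ and $A_{R,n}(\boldsymbol{\theta}):=\boldsymbol{\ell}_R\mathbf{e}_n^{tr}+\tilde A_{R,n}(\boldsymbol{\theta})$. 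From the backward Euler recursion (cf.\ Appendix~A) the homogeneous propagator is $A_n(\boldsymbol{\theta})=\mathbf{B}^n$ with $\mathbf{B}=\mathbf{B}(\boldsymbol{\theta})=(M+\Delta t\,S_{\boldsymbol{\theta}})^{-1}M$, which is the matrix $\mathbf{B}$ of the statement (its $\boldsymbol{\theta}$-dependence being left implicit there).

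The step that needs care, and which I expect to be the main obstacle, is reconciling $\mathbf{u}_0$ with $\mathbf{T}_0$. By definition $g^0(x)=g(x)-T_L(0)\tfrac{x_R-x}{x_R-x_L}-T_R(0)\tfrac{x-x_L}{x_R-x_L}$, so at the interior nodes $\mathbf{u}_0=\mathbf{T}_0-T_L(0)\,\boldsymbol{\ell}_L-T_R(0)\,\boldsymbol{\ell}_R$; by assumption~P3 the scalars $T_L(0)=g(x_L)$ and $T_R(0)=g(x_R)$ are known, hence $A_n(\boldsymbol{\theta})\mathbf{u}_0=\mathbf{B}^n\mathbf{T}_0-T_L(0)\mathbf{B}^n\boldsymbol{\ell}_L-T_R(0)\mathbf{B}^n\boldsymbol{\ell}_R$ is $\mathbf{B}^n\mathbf{T}_0$ plus a known offset (the two correction terms vanish, and the identification with \eqref{eq:mainsol} is exact, precisely when $T_L(0)=T_R(0)=0$; otherwise they are absorbed into that known offset). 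A second, purely clerical, issue is the time-index bookkeeping inside $\tilde A_{L,n},\tilde A_{R,n}$: since the load vector $\mathbf{f}_n$ involves $\partial_t T_L$ and $\partial_t T_R$, it depends on the nodal differences $T_{L,n}-T_{L,n-1}$, $T_{R,n}-T_{R,n-1}$ (and, for $n=1$, on $T_L(0),T_R(0)$), so that assembling the Duhamel-type sums $\sum_{j=1}^{n}\mathbf{B}^{n-j}\Delta t\,(M+\Delta t\,S_{\boldsymbol{\theta}})^{-1}\mathbf{f}_j$ into single matrices acting on $\mathbf{T}_L\in\mathbb{R}^N$ and $\mathbf{T}_R\in\mathbb{R}^N$ requires tracking these shifts; this is already carried out in Appendix~A for Theorem~\ref{th:linear}, so here it only needs to be combined with the lifting terms $\boldsymbol{\ell}_L\mathbf{e}_n^{tr}$ and $\boldsymbol{\ell}_R\mathbf{e}_n^{tr}$ above, yielding the explicit forms of $\mathbf{B}$, $A_{L,n}(\boldsymbol{\theta})$ and $A_{R,n}(\boldsymbol{\theta})$ claimed in the theorem.
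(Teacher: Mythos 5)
Your proposal is correct and follows essentially the same route as the paper's Appendix~B: substitute the representation \eqref{eq:linear} of $\mathbf{u}_n$ into the decomposition \eqref{eq:lemma} evaluated at the interior nodes and at $t=t_n$, and fold the rank-one lifting contribution (placed in the $n$-th column) together with $\tilde A_{L,n},\tilde A_{R,n}$ into the new matrices $A_{L,n},A_{R,n}$. The only differences are cosmetic: the paper absorbs the correction $\mathbf{u}_0=\mathbf{T}_0-T_L(0)\boldsymbol{\ell}_L-T_R(0)\boldsymbol{\ell}_R$ into an extra leading column of $A_{L,n},A_{R,n}$ acting on $T_{L,0},T_{R,0}$ (which Appendix~A keeps as the first entries of the boundary vectors) instead of treating it as a separate known offset, and it encodes the lifting through the vectors $-F_{L,1},-F_{R,1}$ rather than your nodal vectors $\boldsymbol{\ell}_L,\boldsymbol{\ell}_R$.
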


\begin{proof}
See Appendix B (\ref{appB}).
\end{proof}

\begin{corollary}
The data residual vector $\mathbf{R}_{t_n} = (\widehat{T}(x_1,t_n) -  Y_{1,n},\, \ldots,\, \widehat{T}(x_{I-1},t_n) - Y_{I-1,n})^{tr}$ is approximated by
\begin{equation}
\tilde{\mathbf{R}}_{t_{n}} = \left( \mathbf{B}^{n}\mathbf{T}_{0} - \mathbf{Y_n^{I}} \right) + A_{L,n}(\boldsymbol{\theta})\mathbf{T}_L +
A_{R,n}(\boldsymbol{\theta})\mathbf{T}_R.
\label{eq:res}
\end{equation}
\end{corollary}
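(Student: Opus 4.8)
The plan is to derive the approximation \eqref{eq:res} directly from Theorem \ref{th:linear2}, treating the corollary as a straightforward bookkeeping consequence of the linear representation already established. First I would recall that, by definition, the data residual vector at time $t_n$ is $\mathbf{R}_{t_n} = \mathbf{T}_n^{I} - \mathbf{Y_n^{I}}$, where $\mathbf{T}_n^{I} := (\widehat{T}(x_1,t_n), \ldots, \widehat{T}(x_{I-1},t_n))^{tr}$ collects the exact solution values at the interior mesh nodes. The only approximation being introduced is that of the forward solution: we replace $\widehat{T}(\cdot,t_n)$ by its fully discrete finite element approximation, which by Theorem \ref{th:linear2} admits the closed form $\mathbf{T}_n = \mathbf{B}^n \mathbf{T}_0 + A_{L,n}(\boldsymbol{\theta})\mathbf{T}_L + A_{R,n}(\boldsymbol{\theta})\mathbf{T}_R$.

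Next I would substitute this expression into the definition of the residual, obtaining the approximate residual
\[
\tilde{\mathbf{R}}_{t_n} = \mathbf{T}_n - \mathbf{Y_n^{I}} = \bigl(\mathbf{B}^n \mathbf{T}_0 - \mathbf{Y_n^{I}}\bigr) + A_{L,n}(\boldsymbol{\theta})\mathbf{T}_L + A_{R,n}(\boldsymbol{\theta})\mathbf{T}_R,
\]
which is exactly \eqref{eq:res}. One small point that needs care is the indexing: $\mathbf{T}_n$ in Theorem \ref{th:linear2} is "defined similarly to $\mathbf{u}_n$," i.e.\ as the vector of nodal values at the interior nodes $x_1, \ldots, x_{I-1}$ (the boundary nodes are handled separately), so the dimensions of $\mathbf{T}_n$, $\mathbf{Y_n^{I}}$ and $\tilde{\mathbf{R}}_{t_n}$ all agree at $(I-1)\times 1$, and the matrices $A_{L,n}(\boldsymbol{\theta})$, $A_{R,n}(\boldsymbol{\theta})$ are $(I-1)\times N$. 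I would also note that $\mathbf{T}_0$ is fixed by the known initial condition $g$ through $\mathbf{g}^0$ (or, equivalently, through the nodal interpolant of $g$), so the first summand $\mathbf{B}^n\mathbf{T}_0 - \mathbf{Y_n^{I}}$ is a known vector once the data are observed.

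The main obstacle here is not the algebra — which is essentially a one-line substitution — but rather making precise the sense in which $\tilde{\mathbf{R}}_{t_n}$ "approximates" $\mathbf{R}_{t_n}$: this hinges entirely on the finite element and backward Euler error estimates for the parabolic problem \eqref{eq:hom}, i.e.\ the standard $O(\Delta x^2 + \Delta t)$ bounds in the appropriate norm, together with the decomposition \eqref{eq:lemma} that reduces the original problem to the homogeneous one. Since those estimates are classical (and the convergence of the scheme is not the object of the corollary), I would simply invoke Theorems \ref{th:linear} and \ref{th:linear2} and the consistency of the backward Euler finite element method, and state that $\tilde{\mathbf{R}}_{t_n} \to \mathbf{R}_{t_n}$ as $\Delta x, \Delta t \to 0$. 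The substantive content of the corollary is the \emph{affine dependence on} $(\mathbf{T}_L, \mathbf{T}_R)$, which is inherited verbatim from \eqref{eq:mainsol}, and it is this structure that will be exploited in Section \ref{sec4} to marginalize the boundary parameters out of the Gaussian likelihood \eqref{eq:lik} in closed form.
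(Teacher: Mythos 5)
Your proof is correct and follows the same (essentially only) route the paper intends: the corollary is an immediate substitution of the linear representation \eqref{eq:mainsol} from Theorem \ref{th:linear2} into the definition of the residual $\mathbf{R}_{t_n}$, replacing the exact interior values $\widehat{T}(x_j,t_n)$ by their discrete counterparts, and the paper itself offers no further argument. Your added remarks on dimensions and on the sense of approximation (standard backward Euler/finite element consistency) are sound but not required for the statement.
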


\section{The marginal likelihood of $\boldsymbol\theta$}
\label{sec4}

Given the observations $\mathbf{Y_1}, \ldots, \mathbf{Y_N}$, we showed, in the Section \ref{sec2}, how to obtain the joint likelihood function of $\boldsymbol{\theta}$ and the boundary parameters $\mathbf{T}_{L}, \mathbf{T}_{R}$.

In the present section, we derive a convenient expression for the marginal likelihood of $\boldsymbol\theta$, under the assumption that the prior distributions for $\mathbf{T}_{L}$ and  $\mathbf{T}_{R}$ are independent Gaussian:
\begin{equation}
\begin{aligned}
\label{bprior}
\mathbf{T}_{L} \sim \mathcal{N}(\boldsymbol{\mu}_L,\sigma_p^2 \mathbf{I}_{N}),\\
\mathbf{T}_{R} \sim \mathcal{N}(\boldsymbol{\mu}_R,\sigma_p^2 \mathbf{I}_{N}),
\end{aligned}
\end{equation}
where $\boldsymbol{\mu}_L$ and $\boldsymbol{\mu}_R$ are least square spline fits of the observed data, $\mathbf{Y}_L$ and $\mathbf{Y}_R$, respectively. Then, using \eqref{eq:lik}, the marginal likelihood of $\boldsymbol\theta$ is given by:
\begin{eqnarray}
&& \rho(\mathbf{Y_1}, \ldots, \mathbf{Y_N} | \boldsymbol{\theta}) = (\sqrt{2 \pi} \sigma)^{- N (I+1)} \times (\sqrt{2\pi}\sigma_p)^{-2N} \int_{\mathcal{T}_R} \int_{\mathcal{T}_L} \exp \left( - \frac{1}{2 \sigma^2}  \sum_{n=1}^{N} \left\| {\mathbf{R}}_{t_{n}}\right\|^{2}_{\ell^2} \right) \nonumber \\
&& \times \!\exp \!\left(\!- \frac{1}{2 \sigma^2} (\mathbf{T}_L - \mathbf{Y}_L)^{tr} (\mathbf{T}_L -\mathbf{Y}_L) - \frac{1}{2 \sigma^2} (\mathbf{T}_R -\mathbf{Y}_R)^{tr} (\mathbf{T}_R - \mathbf{Y}_R) \! \right) \nonumber \\
&& \times \!\exp \!\left(\!- \frac{1}{2 \sigma_p^2} (\mathbf{T}_L - \boldsymbol{\mu}_L)^{tr} (\mathbf{T}_L - \boldsymbol{\mu}_L) - \frac{1}{2 \sigma_p^2} (\mathbf{T}_R - \boldsymbol{\mu}_R)^{tr} (\mathbf{T}_R - \boldsymbol{\mu}_R)\! \right) d\mathbf{T}_L d\mathbf{T}_R. \label{eq:intlike}
\end{eqnarray}

To provide the exact expression of the marginal likelihood of the linear parabolic equation coefficients $\boldsymbol{\theta}$, which has been implemented in the computational examples presented in Section \ref{sec5}, it is convenient to introduce the following notation:

\[ \mathop{\Delta_L}\limits_{N \times N} = \sum_{n=1}^{N} A_{L,n}(\boldsymbol{\theta})^{tr} A_{L,n}(\boldsymbol{\theta}) \,, \:\: \mathop{\Delta_R}\limits_{N \times N}
= \sum_{n=1}^{N} A_{R,n}(\boldsymbol{\theta})^{tr} A_{R,n}(\boldsymbol{\theta})\,, \]
\[  \mathop{\Delta_{2,L}}\limits_{N \times 1} = \sum_{n=1}^{N} A_{L,n}(\boldsymbol{\theta})^{tr} (\mathbf{Y_n^{I}} - \mathbf{B}^{n}\mathbf{T}_{0})\,, \:\:
 \mathop{\Delta_{2,R}}\limits_{N \times 1} = \sum_{n=1}^{N} A_{R,n}(\boldsymbol{\theta})^{tr} (\mathbf{Y_n^{I}} - \mathbf{B}^{n}\mathbf{T}_{0})\,, \]

\[ \mathop{A_{LR}}\limits_{N \times N} = \sum_{n=1}^{N} A_{L,n}(\boldsymbol{\theta})^{tr} A_{R,n}(\boldsymbol{\theta})\,,
\:\: \mathop{D_{\sigma^2}}\limits_{N \times N} = \textrm{diag}\left(\frac{1}{\sigma^2}, \ldots, \frac{1}{\sigma^2}\right)\,, \:\: \mathop{D_{\sigma_p^2}}\limits_{N \times N} = \textrm{diag}\left(\frac{1}{\sigma_p^2}, \ldots, \frac{1}{\sigma_p^2}\right) \,.
\]
 
\begin{theorem}
\label{Th:marg}
The marginal likelihood of $\boldsymbol{\theta}$ is given by:
\[ \rho(\mathbf{Y_1}, \ldots, \mathbf{Y_N} | \boldsymbol{\theta}) = (\sqrt{2 \pi} \sigma)^{- N (I+1)}(\sqrt{2\pi}\sigma_p)^{-2N}(2 \pi)^{N/2} |\Lambda_0|^{1/2} (2 \pi)^{N/2} |\Lambda_1|^{1/2} \]
\[ \times \exp \Bigg\{ - \frac{1}{2 \sigma_p^2} \left[ {\boldsymbol{\mu}_L}^{tr} {\boldsymbol{\mu}_L} + {\boldsymbol{\mu}_R}^{tr} {\boldsymbol{\mu}_R} \right] - \frac{1}{2 \sigma^2} \left[ \mathbf{Y}_L^{tr} \mathbf{Y}_L + \mathbf{Y}_R^{tr} \mathbf{Y}_R + \sum_{i=1}^{N} (\mathbf{Y_n^{I}} - \mathbf{B}^{n}\mathbf{T}_{0})^{tr} (\mathbf{Y_n^{I}} - \mathbf{B}^{n}\mathbf{T}_{0}) \right] \]
\[ +  \frac{1}{2} ({\boldsymbol{\mu}_L}^{tr} D_{\sigma_p^2} + \mathbf{Y}_L^{tr} D_{\sigma^2} + \Delta_{2,L}^{tr} D_{\sigma^2})
\Lambda_0 (D_{\sigma_p^2} {\boldsymbol{\mu}_L} + D_{\sigma^2} \mathbf{Y}_L + D_{\sigma^2} \Delta_{2,L}) \]
\begin{equation}
 + \frac{1}{2} \left[ \mathbf{t}^{tr}_{R,2} \Lambda_1 \mathbf{t}_{R,2} + 2 \mathbf{t}^{tr}_{R,3} \Lambda_1 \mathbf{t}_{R,2} + \mathbf{t}^{tr}_{R,3} \Lambda_1 \mathbf{t}_{R,3} \right] \Bigg\},
\label{eq:marg}
\end{equation}
where $\Lambda_0, \Lambda_1, \mathbf{t}_{R,2}$ and $\mathbf{t}_{R,3}$ are independent of $\mathbf{T}_L$ and $\mathbf{T}_R$.
\end{theorem}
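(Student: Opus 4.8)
The plan is to evaluate the $2N$-dimensional Gaussian integral in \eqref{eq:intlike} by integrating first over $\mathbf{T}_L$ and then over $\mathbf{T}_R$. The first step is to insert the affine approximation of the residual furnished by the Corollary of Section~\ref{sec3}, namely $\tilde{\mathbf{R}}_{t_n} = (\mathbf{B}^n \mathbf{T}_0 - \mathbf{Y_n^I}) + A_{L,n}(\boldsymbol\theta)\mathbf{T}_L + A_{R,n}(\boldsymbol\theta)\mathbf{T}_R$, into $\sum_{n=1}^N \|\mathbf{R}_{t_n}\|_{\ell^2}^2$, expand the square, and collect powers of $\mathbf{T}_L$ and $\mathbf{T}_R$. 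Up to a term independent of the boundary parameters this produces the quadratic part $\mathbf{T}_L^{tr}\Delta_L\mathbf{T}_L + 2\mathbf{T}_L^{tr}A_{LR}\mathbf{T}_R + \mathbf{T}_R^{tr}\Delta_R\mathbf{T}_R$ and the linear part $-2\Delta_{2,L}^{tr}\mathbf{T}_L - 2\Delta_{2,R}^{tr}\mathbf{T}_R$, which is exactly what the notation introduced just before the theorem is designed to record. Adding the likelihood piece $\sigma^{-2}[\|\mathbf{T}_L-\mathbf{Y}_L\|^2+\|\mathbf{T}_R-\mathbf{Y}_R\|^2]$ and the prior piece $\sigma_p^{-2}[\|\mathbf{T}_L-\boldsymbol\mu_L\|^2+\|\mathbf{T}_R-\boldsymbol\mu_R\|^2]$, the exponent in \eqref{eq:intlike} becomes $-\tfrac12$ times a quadratic polynomial $Q(\mathbf{T}_L,\mathbf{T}_R)$ whose $\mathbf{T}_L$-block combines $\Delta_L$, $D_{\sigma^2}$ and $D_{\sigma_p^2}$ (this defines $\Lambda_0^{-1}$), whose $\mathbf{T}_R$-block combines $\Delta_R$, $D_{\sigma^2}$ and $D_{\sigma_p^2}$, and whose off-diagonal block is $D_{\sigma^2}A_{LR}$.

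Next I would integrate over $\mathbf{T}_L$ with $\mathbf{T}_R$ held fixed. The integrand is Gaussian in $\mathbf{T}_L$ with precision $\Lambda_0^{-1}$ and linear coefficient $D_{\sigma_p^2}\boldsymbol\mu_L + D_{\sigma^2}\mathbf{Y}_L + D_{\sigma^2}\Delta_{2,L}$ shifted by the $\mathbf{T}_R$-dependent term $-D_{\sigma^2}A_{LR}\mathbf{T}_R$ coming from the cross term. Completing the square and applying $\int \exp(-\tfrac12 \mathbf{x}^{tr}\Lambda_0^{-1}\mathbf{x} + \mathbf{v}^{tr}\mathbf{x})\,d\mathbf{x} = (2\pi)^{N/2}|\Lambda_0|^{1/2}\exp(\tfrac12\mathbf{v}^{tr}\Lambda_0\mathbf{v})$ yields the factor $(2\pi)^{N/2}|\Lambda_0|^{1/2}$, the constant term $\tfrac12(\boldsymbol\mu_L^{tr}D_{\sigma_p^2} + \mathbf{Y}_L^{tr}D_{\sigma^2} + \Delta_{2,L}^{tr}D_{\sigma^2})\Lambda_0(D_{\sigma_p^2}\boldsymbol\mu_L + D_{\sigma^2}\mathbf{Y}_L + D_{\sigma^2}\Delta_{2,L})$ that appears in \eqref{eq:marg}, a new linear-in-$\mathbf{T}_R$ term routed through $\Lambda_0 A_{LR}$, and a new quadratic-in-$\mathbf{T}_R$ (Schur-complement) term built from $A_{LR}^{tr}D_{\sigma^2}\Lambda_0 D_{\sigma^2}A_{LR}$.

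Then I would integrate over $\mathbf{T}_R$. After the previous step the exponent is once more quadratic, of the form $-\tfrac12\mathbf{T}_R^{tr}\Lambda_1^{-1}\mathbf{T}_R + (\mathbf{t}_{R,2}+\mathbf{t}_{R,3})^{tr}\mathbf{T}_R + \mathrm{const}$, where $\Lambda_1^{-1}$ is the Schur complement obtained by subtracting $A_{LR}^{tr}D_{\sigma^2}\Lambda_0 D_{\sigma^2}A_{LR}$ from the combination of $\Delta_R$, $D_{\sigma^2}$ and $D_{\sigma_p^2}$, while $\mathbf{t}_{R,2}$ and $\mathbf{t}_{R,3}$ are respectively the ``direct'' linear terms (assembled from $\boldsymbol\mu_R$, $\mathbf{Y}_R$, $\Delta_{2,R}$) and the ``induced'' ones (the image under $\Lambda_0 A_{LR}$ of the $\mathbf{T}_L$ linear coefficient). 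One more use of the same Gaussian identity gives $(2\pi)^{N/2}|\Lambda_1|^{1/2}$ and the bracket $\tfrac12[\mathbf{t}_{R,2}^{tr}\Lambda_1\mathbf{t}_{R,2} + 2\mathbf{t}_{R,3}^{tr}\Lambda_1\mathbf{t}_{R,2} + \mathbf{t}_{R,3}^{tr}\Lambda_1\mathbf{t}_{R,3}]$. Finally, collecting every term that carries no factor of $\mathbf{T}_L$ or $\mathbf{T}_R$ — those shed at each completion of the square together with $\sigma_p^{-2}[\boldsymbol\mu_L^{tr}\boldsymbol\mu_L + \boldsymbol\mu_R^{tr}\boldsymbol\mu_R]$, $\sigma^{-2}[\mathbf{Y}_L^{tr}\mathbf{Y}_L + \mathbf{Y}_R^{tr}\mathbf{Y}_R]$ and $\sigma^{-2}\sum_n(\mathbf{Y_n^I}-\mathbf{B}^n\mathbf{T}_0)^{tr}(\mathbf{Y_n^I}-\mathbf{B}^n\mathbf{T}_0)$ — and restoring the prefactor $(\sqrt{2\pi}\sigma)^{-N(I+1)}(\sqrt{2\pi}\sigma_p)^{-2N}$ already present in \eqref{eq:intlike}, reproduces \eqref{eq:marg}. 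The main obstacle is organizational rather than conceptual: one must carry the factors $\sigma^{-2}$ and $\sigma_p^{-2}$ faithfully, identify the Schur-complement correction to $\Lambda_1^{-1}$, and split the linear-in-$\mathbf{T}_R$ coefficient correctly into $\mathbf{t}_{R,2}$ and $\mathbf{t}_{R,3}$; the two integrations themselves are routine once $Q$ is written out explicitly, and one uses throughout that $\Lambda_0,\Lambda_1,\mathbf{t}_{R,2},\mathbf{t}_{R,3}$ depend on $\boldsymbol\theta$ only through the matrices $A_{L,n}(\boldsymbol\theta),A_{R,n}(\boldsymbol\theta)$ and not on the integration variables, so they are treated as constants during each integration.
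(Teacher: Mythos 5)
Your proposal is correct and follows essentially the same route as the paper's proof: substitute the affine residual approximation, complete the square and integrate out $\mathbf{T}_L$ first (yielding $\Lambda_0$), then integrate out $\mathbf{T}_R$ with the Schur-complement precision $\Lambda_1^{-1}$, and collect the boundary-independent constants. The paper phrases the two Gaussian integrations via the moment-generating-function identity rather than the explicit completed-square integral, but this is the same computation.
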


\begin{proof}
First, we observe that \eqref{eq:intlike} can be written as:
\[ (\sqrt{2 \pi} \sigma)^{- N (I+1)} \times (\sqrt{2\pi}\sigma_p)^{-2N} \times \exp \Bigg\{ - \frac{1}{2 \sigma^2_p} \big[ {\boldsymbol{\mu}_L}^{tr} {\boldsymbol{\mu}_L} + {\boldsymbol{\mu}_R}^{tr} {\boldsymbol{\mu}_R} \big] \]
\[ - \frac{1}{2 \sigma^2} \left[  \mathbf{Y}_L^{tr} \mathbf{Y}_L + \mathbf{Y}_R^{tr} \mathbf{Y}_R + \sum_{n=1}^{N} (\mathbf{Y_n^{I}} - \mathbf{B}^{n}\mathbf{T}_{0})^{tr} (\mathbf{Y_n^{I}} - \mathbf{B}^{n}\mathbf{T}_{0}) \right] \Bigg\} \]
\[ \times \int_{\mathcal{T}_R} \int_{\mathcal{T}_L} \exp \Bigg\{ - \frac{1}{2}  \left[ \mathbf{T}_L^{tr} (D_{\sigma^2} + D_{\sigma_p^2} + \frac{1}{\sigma^2} \Delta_L) \mathbf{T}_L - 2({\boldsymbol{\mu}_L}^{tr} D_{\sigma_p^2} + \mathbf{Y}_L^{tr} D_{\sigma^2} + \Delta_{2,L}^{tr} D_{\sigma^2})\mathbf{T}_L \right. \]
\[ \left. + \frac{2}{\sigma^2} \mathbf{T}_R^{tr} A_{LR}^{tr} \mathbf{T}_L + \mathbf{T}_R^{tr} (D_{\sigma^2} + D_{\sigma_p^2} + \frac{1}{\sigma^2} \Delta_R) \mathbf{T}_R - 2 ({\boldsymbol{\mu}_R}^{tr} D_{\sigma_p^2} + \mathbf{Y}_R^{tr} D_{\sigma^2} + \Delta_{2,R}^{tr} D_{\sigma^2}) \mathbf{T}_R
 \right]  \Bigg\}d\mathbf{T}_L d\mathbf{T}_R \,. \] 
 
To marginalize $\mathbf{T}_L$ and $\mathbf{T}_R$, we assume that they are independent Gaussian random vectors. We can then use the following standard result:
if $\mathbf{X} \sim {\mathcal{N}}_{p} (\boldsymbol{\mu}, \mathbf{\Sigma})$, then $ E (\exp(\mathbf{t}^{tr} \mathbf{X}))
= \exp(\mathbf{t}^{tr} \boldsymbol{\mu} + \frac{1}{2} \mathbf{t}^{tr} \mathbf{\Sigma} \mathbf{t} ) \,.$

Therefore, by integrating first with respect to $\mathbf{T}_L$, the marginal likelihood of $\boldsymbol{\theta}$ and $\mathbf{T}_R$ is proportional to the product of a factor that is independent of $\mathbf{T}_L$ and the following term

\[ \int_{\mathcal{T}_L} \exp \left\{ -\frac{1}{2} \mathbf{T}_L^{tr} \left( D_{\sigma^2} + D_{\sigma_p^2} + \frac{1}{\sigma^2} \Delta_L \right) \mathbf{T}_L \right\} \, \exp(\mathbf{t}^{tr}_{L,1} \mathbf{T}_L) d \mathbf{T}_L\,,   \]

where \[ \mathop{\mathbf{t}^{tr}_{L,1}}\limits_{1 \times N} = ({\boldsymbol{\mu}_L}^{tr} D_{\sigma_p^2} + \mathbf{Y}_L^{tr} D_{\sigma^2} + \Delta_{2,L}^{tr} D_{\sigma^2}) -  \frac{1}{\sigma^2} \mathbf{T}_R^{tr} A_{LR}^{tr} \,.\]

It is now convenient to define $\Lambda_0^{-1} := (D_{\sigma^2} + D_{\sigma_p^2} + \frac{1}{\sigma^2} \Delta_L)$. The marginal likelihood of $\boldsymbol{\theta}$ and $\mathbf{T}_R$ is proportional to the product of a factor that is independent of $\mathbf{T}_L$ and the term
$(2 \pi)^{N/2} |\Lambda_0|^{1/2} \exp \left\{ \frac{1}{2} \mathbf{t}^{tr}_{L,1} \Lambda_0 \mathbf{t}_{L,1} \right\}\,.$ \\
Therefore, the marginal likelihood of $\boldsymbol{\theta}$ can be explicitly written as
\begin{eqnarray*}
&& (\sqrt{2 \pi} \sigma)^{- N (I+1)} \times (\sqrt{2\pi}\sigma_p)^{-2N} \times (2 \pi)^{N/2} |\Lambda_0|^{1/2} \\
& \times & \exp \Bigg\{ - \frac{1}{2 \sigma_p^2} \big( {\boldsymbol{\mu}_L}^{tr} {\boldsymbol{\mu}_L} + {\boldsymbol{\mu}_R}^{tr} {\boldsymbol{\mu}_R} \big) - \frac{1}{2 \sigma^2} \bigg[ \mathbf{Y}_L^{tr} \mathbf{Y}_L + \mathbf{Y}_R^{tr} \mathbf{Y}_R + \sum_{i=1}^{N} (\mathbf{Y_n^{I}} - \mathbf{B}^{n}\mathbf{T}_{0} )^{tr} (\mathbf{Y_n^{I}} - \mathbf{B}^{n}\mathbf{T}_{0}) \bigg] \Bigg\} \\
&  \times & \int_{\mathcal{T}_R} \exp \Bigg\{ \frac{1}{2} \mathbf{t}^{tr}_{L,1} \Lambda_0 \mathbf{t}_{L,1} - \frac{1}{2}  \Bigg[ \mathbf{T}_R^{tr} (D_{\sigma^2} + D_{\sigma_p^2} + \frac{1}{\sigma^2} \Delta_R) \mathbf{T}_R -
2 ({\boldsymbol{\mu}_R}^{tr} D_{\sigma_p^2} + \mathbf{Y}_R^{tr} D_{\sigma^2} + \Delta_{2,R}^{tr} D_{\sigma^2})\mathbf{T}_R  \Bigg]  \Bigg\} d\mathbf{T}_R \,.
\end{eqnarray*}

The entire last expression is equal to the product of a term that is independent of $\mathbf{T}_R$ and the following term:
\[ \int_{\mathcal{T}_R} \exp \left\{ - \frac{1}{2} \mathbf{T}_R^{tr} \left[ (D_{\sigma^2} + D_{\sigma_p^2} + \frac{1}{\sigma^2} \Delta_R) - \left(\frac{1}{\sigma^2}\right)^2 A_{LR}^{tr} \Lambda_0 A_{LR} \right] \mathbf{T}_R \right\} \, \exp\left\{\mathbf{t}^{tr}_{R,1} \mathbf{T}_R\right\} d\mathbf{T}_R \,,  \]
where

\[ \mathbf{t}^{tr}_{R,1} = \underbrace{ ({\boldsymbol{\mu}_R}^{tr} D_{\sigma_p^2} + \mathbf{Y}_R^{tr} D_{\sigma^2} + \Delta_{2,R}^{tr} D_{\sigma^2})}_{\mathbf{t}^{tr}_{R,2}}
 \underbrace{- ({\boldsymbol{\mu}_L}^{tr} D_{\sigma_p^2} + \mathbf{Y}_L^{tr} D_{\sigma^2} + \Delta_{2,L}^{tr} D_{\sigma^2}) \Lambda_0 A_{LR}  }_{\mathbf{t}^{tr}_{R,3}} \,.   \]

If we now define $\Lambda_1^{-1} := (D_{\sigma^2} + D_{\sigma_p^2} + \frac{1}{\sigma^2} \Delta_R) - \left(\frac{1}{\sigma^2}\right)^2 A_{LR}^{tr} \Lambda_0 A_{LR}$ and integrate with respect to $ \mathbf{T}_R$, we have 
\begin{eqnarray*}
&& \int_{\mathcal{T}_R} \exp \left\{ -\frac{1}{2} \mathbf{T}_R^{tr} \Lambda_1^{-1} \mathbf{T}_R \right\}  \exp(\mathbf{t}^{tr}_{R,1} \mathbf{T}_R) d \mathbf{T}_R = (2 \pi)^{N/2} |\Lambda_1|^{1/2} \exp\left\{ \frac{1}{2} \mathbf{t}^{tr}_{R,1} \Lambda_1 \mathbf{t}_{R,1}  \right\}\\
&& = (2 \pi)^{N/2} |\Lambda_1|^{1/2} \exp\left\{ \frac{1}{2} (\mathbf{t}^{tr}_{R,2} + \mathbf{t}^{tr}_{R,3}) \Lambda_1 (\mathbf{t}_{R,2} + \mathbf{t}_{R,3}) \right\},
\end{eqnarray*}
after evaluating the integral. We finally obtain \eqref{eq:marg}.
\qed
\end{proof}

\section{A Bayesian inference for thermal diffusivity}
\label{sec5}

In this section, we implement our Bayesian approach to infer the thermal diffusivity $\theta$, an unknown parameter that appears in the heat equation and measures the rapidity of the heat propagation through a material \citep{Wilson}. Temperature data are available on the basis of cooling experiments. Synthetic data are used to carry out the inference. 

Consider the heat equation (one--dimensional diffusion equation for $T(x,t)$):
\begin{equation}
\begin{cases}
\partial_t T - \partial_x \left(\theta(x) \partial_x T \right) = 0, & x \in (x_L, x_R), \, 0 < t \leqslant t_N < \infty \\
T(0,t) = T_L(t), & t \in [0,t_N] \\
T(1,t) = T_R(t), & t \in [0,t_N] \\
T(x,0) = g(x), & x \in(x_L, x_R).
\end{cases}
\label{eq:main}
\end{equation}

We want to infer the thermal diffusivity, $\theta(x)$, using a Bayesian approach when the temperature is measured at $I+1$ locations, $x_0=x_L, x_1, x_2, \ldots, x_{I-1}, x_I = x_R$,
at each of the $N$ times, $t_1, t_2, \ldots, t_N$. Clearly, this problem is a special case of \eqref{eq:main1} where $L_{\boldsymbol{\theta}} = - \partial_x \left( \theta(x) \partial_x T \right)$ and $\theta(x) > 0$. We can therefore immediately obtain the non-normalized posterior distribution of $\theta$ using the marginal likelihood \eqref{eq:marg}.

The prior distributions for $\theta$ can be specified in different ways. In this section we will consider two cases, when $\theta$ is a lognormal random variable and when $\theta$ depends on the space variable $x$ and is modeled by means of a lognormal random field. We focus on the second case in Subsection \ref{sectionfivethree}. We start by discussing the case where the thermal diffusivity prior is independent of $x$.

If we consider a lognormal prior $\log{\theta} \sim \mathcal{N} \left( \nu, \tau \right)$, where $\nu \in \mathbb{R}$ and $\tau > 0$, then the non-normalized posterior distribution of $\theta$ is given by
\begin{equation}
\rho_{\nu, \tau}(\theta | \mathbf{Y_1}, \ldots, \mathbf{Y_N} ) \propto \frac{1}{\sqrt{2 \pi} \theta \tau} \exp \left( - \frac{(\log  \theta- \nu)^2}{2 \tau^2} \right) \rho(\mathbf{Y_1}, \ldots, \mathbf{Y_N} | \theta).
\label{eq:postfin}
\end{equation}
The posterior distribution of $\theta$ can be approximated by Laplace's method (\cite{Ghosh}, [Chapter 4]) to obtain a Gaussian posterior
\begin{equation*}
\rho_{\nu, \tau}(\theta| \mathbf{Y_1}, \ldots, \mathbf{Y_N}) \approx \frac{1}{\sqrt{2 \pi |H(\hat{\theta})|}} \exp \left\{ -(\theta- \hat{\theta})^{tr}H(\hat{\theta})^{-1}(\theta- \hat{\theta}) \right\}
\end{equation*}
where $\hat{\theta}$ is the maximum a posteriori probability (MAP) estimate and $H(\hat{\theta})$ is the Hessian matrix of the log posterior evaluated at $\hat{\theta}$.

To assess the behavior of our method, we introduce a synthetic dataset generated with constant $\theta$. Let us assume, without loss of generality, that the interval time $[0, t_N]$ is equal to $[0, 1]$, $x_L=0$ and $x_R=1$.\\

\textbf{Dataset A}\\
In order to generate data, we solve the initial-boundary value problem for the heat equation with Robin boundary conditions:
\begin{eqnarray*}
\partial_x T(x_L,t) &=& \frac{h}{\kappa} \left( T(x_L,t) -  T_{out} \right) \,, \:\:t \in [0,1], \\
\partial_x T(x_R,t) &=& \frac{h}{\kappa} (T_{out} - T(x_R,t)) \,, \:\:t \in [0,1],
\end{eqnarray*}
and the initial condition $T(x,0) = T_0 \,,\:\:x\in(0, 1)$, where $ \theta(x) = 1 \times 10^{-7}\, m^2/s $, $h$ is the convective heat transfer coefficient, $\kappa$ denotes the thermal conductivity, $ \frac{h}{\kappa} = 1\, (1/m)$, $T_{out}= 20\,^{\circ}\mathrm{C}$ and $T_0 = 100\,^{\circ}\mathrm{C}$ (see Figure \ref{pic:exact}).

\begin{figure}[h]
\centering
\includegraphics[scale=0.55]{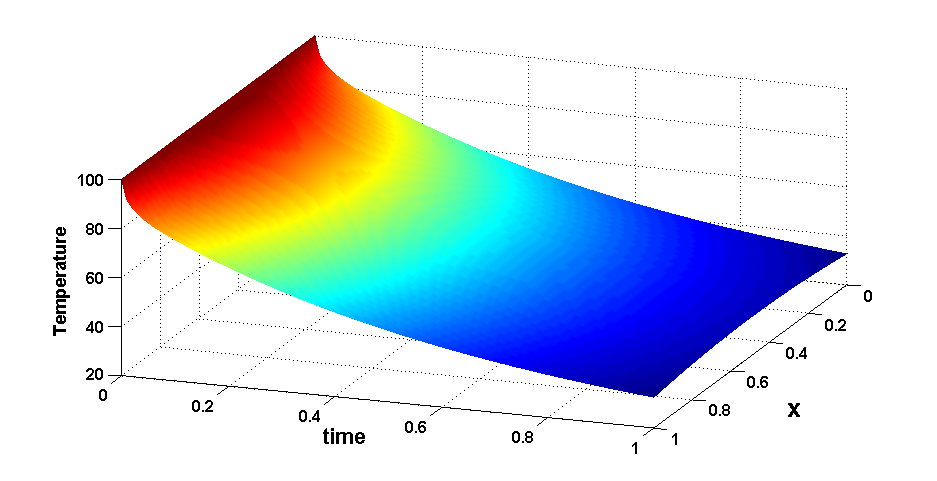}
\caption{Exact solution of the initial-boundary value problem for the heat equation, $\theta = 1 \times 10^{-7}\, m^2/s$.}
\label{pic:exact}
\end{figure}

A synthetic dataset (hereafter named dataset A) is generated, with a measurement standard error noise of $\sigma_{d} = 0.56$. \\
Before presenting the implementation of our novel technique, we will show how the Bayesian method works, using the joint likelihood \eqref{eq:lik}, under the very restrictive assumption that the temperature values at the boundaries are exactly known.

\subsection{Example 1} 

Suppose that the thermal diffusivity, $\theta$, is a random variable with a lognormal prior, $\log{\theta} \sim \mathcal{N} \left( \nu, \tau \right)$. In this case, the non-normalized posterior density for $\theta$ is given by
\begin{equation*}
\rho_{\nu, \tau}(\theta | \mathbf{Y_1}, \ldots, \mathbf{Y_N} ) \propto \frac{1}{\sqrt{2 \pi} \theta \tau} \exp \left( - \frac{(\log  \theta- \nu)^2}{2 \tau^2} \right)  \exp \left( - \frac{1}{2 \sigma^2}  \sum_{n=1}^{N} \left\| {\mathbf{R}}_{t_{n}}\right\|^{2}_{\ell^2} \right),
\label{simplepost}
\end{equation*}
where ${\mathbf{R}}_{t_{n}}$ is used here because the boundary data are known exactly.

\begin{figure}[h!]
	\begin{minipage}{.4\linewidth}
	\centering
		\includegraphics[scale=0.4]{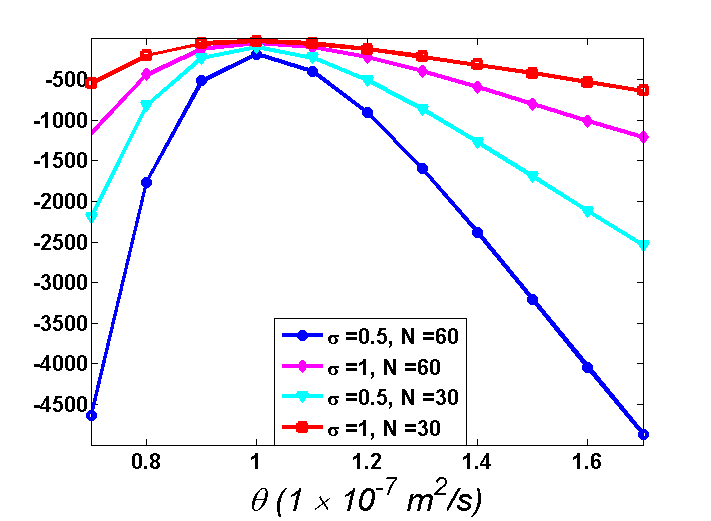}
		\end{minipage}%
		%\label{pic:loglike0}
	\hspace{5em}
	\begin{minipage}{.4\linewidth}
	\centering
		\includegraphics[scale=0.4]{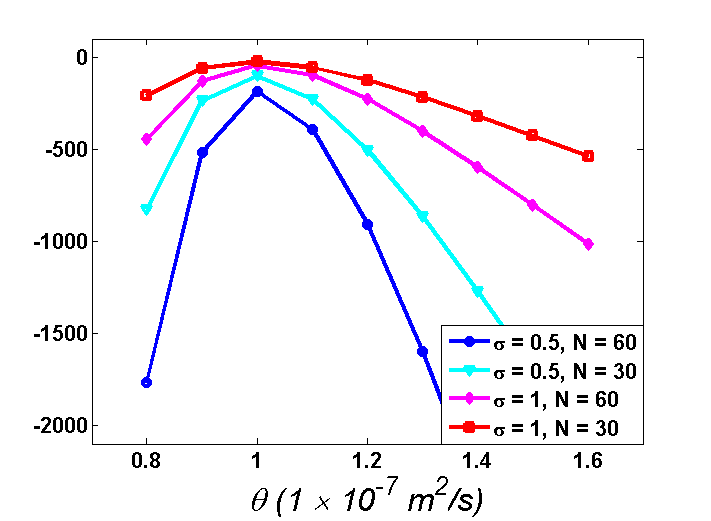}
		\end{minipage}
		%\label{pic:logpost0}
\caption{Example 1: Comparison between log-likelihoods (on the left) and log-posteriors (on the right) for $\theta$ using different numbers of observations, $N$, and different values of $\sigma$.}
\label{pic:Ex0}
\end{figure}

Given that $\theta$ is a lognormal random variable with $\nu = \tau = 0.1$, the resulting posterior will depend on $\sigma$ and the number of observations $N$ that are used to compute the log-likelihood. Figure \ref{pic:Ex0} shows the behavior of the log-likelihood and the log-posterior for $\theta$ using different values for $\sigma$ and $N$.

\begin{figure}[h!]
	\centering
	\includegraphics[scale=0.55]{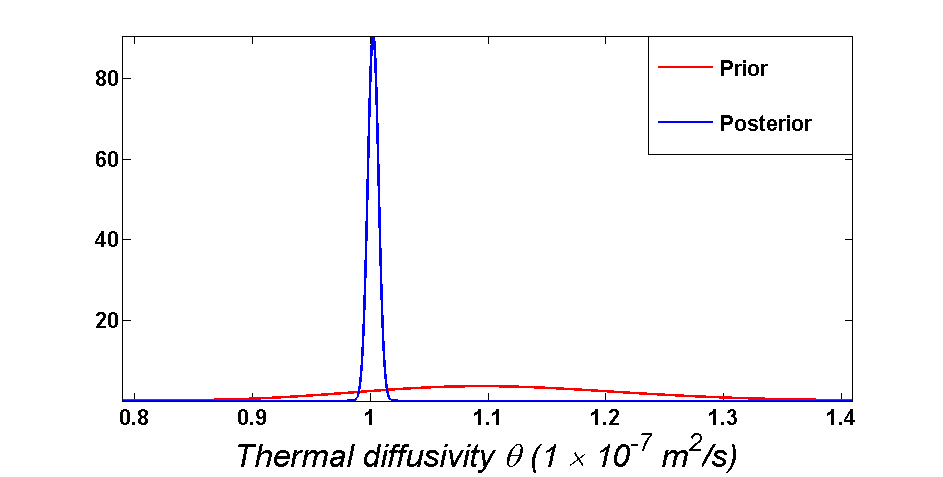}
	\caption{Example 1: Lognormal prior and approximated Gaussian posterior densities for $\theta$, where $\sigma_p = \sigma = 0.5$ and $N = 60$.}
	\label{pic:priorpostsimple}
\end{figure}

We then use the Laplace approximation to derive the Gaussian posterior approximated density for $\theta$. The prior and posterior densities for $\theta$ are presented in Figure \ref{pic:priorpostsimple}, where it can be appreciated that we obtained a Gaussian posterior, with mean $1.0025$ and standard deviation $0.0044$, which is concentrated around the true value of the parameter, $\theta$, despite having a very broad prior. We are now in the position to extend our implementation to embrace the case where the temperature values at the boundaries are unknown parameters as well.

\subsection{Example 2}

In this example, we consider again $\theta$ as a random variable with a lognormal prior, $\log{\theta} \sim \mathcal{N} \left( \nu, \tau \right)$. Unlike Example 1, we assume noisy boundary measurements and a Gaussian prior distribution for the boundary parameters as in \eqref{bprior}. Therefore, the non-normalized posterior density for $\theta$ is given by
\begin{equation*}
\rho_{\nu, \tau}(\theta | \mathbf{Y_1}, \ldots, \mathbf{Y_N} ) \propto \frac{1}{\sqrt{2 \pi} \theta \tau} \exp \left( - \frac{(\log  \theta- \nu)^2}{2 \tau^2} \right) \rho(\mathbf{Y_1}, \ldots, \mathbf{Y_N} | \theta),
\label{simplepost}
\end{equation*}
where $\rho(\mathbf{Y_1}, \ldots, \mathbf{Y_N} | \theta)$ is the marginal likelihood of $\theta$ defined in Theorem \ref{Th:marg}.

\begin{remark}
\label{re:ex2}
Since $\theta(x)$ is supposed to be constant, we can obtain equation \eqref{eq:mainsol} alternatively by solving the heat equation using finite differences (see Appendix C \ref{appC}).
\end{remark}

\begin{figure}[h!]
	\begin{minipage}{.4\linewidth}
	\centering
		\includegraphics[scale=0.4]{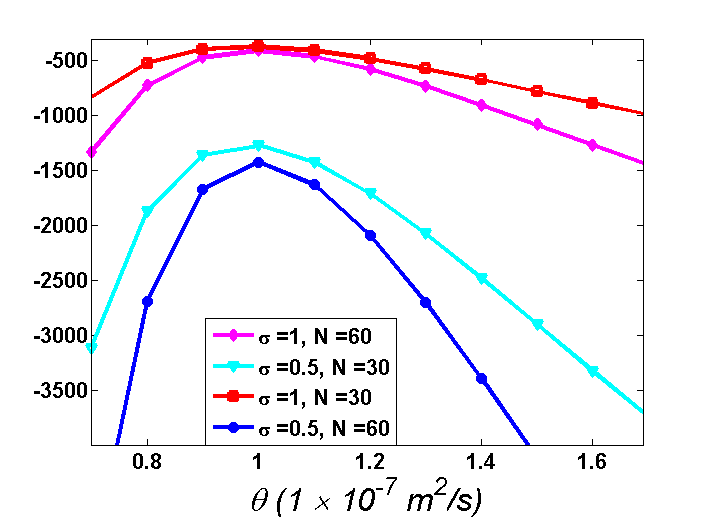}
		%\label{pic:loglikexample1}
	\end{minipage}
	\hspace{5em}
	\begin{minipage}{.4\linewidth}
	 \centering
		\includegraphics[scale=0.4]{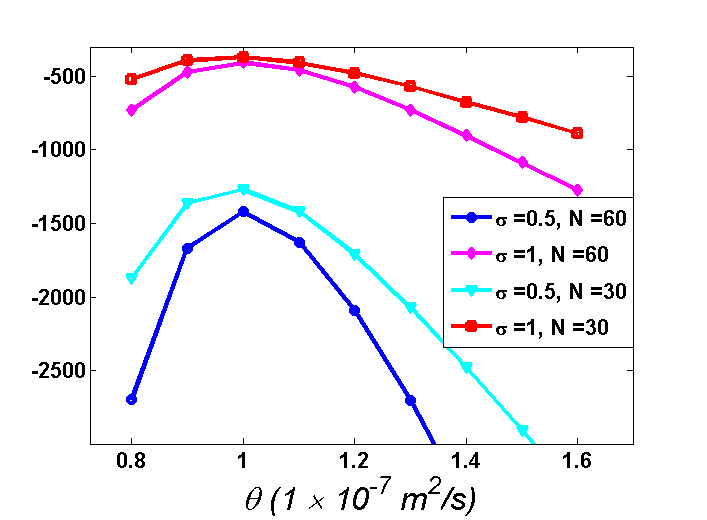}
		%\label{pic:logpostexample1}
	\end{minipage}
\caption{Example 2: Comparison between log-likelihoods (on the left) and log-posteriors (on the right) for $\theta$ using different numbers of observations, $N$, and different values of $\sigma$.}
\label{pic:Ex11}
\end{figure}

Numerical results are now presented using the synthetic dataset A and assuming that $\theta$ is a lognormal random variable with $\nu = \tau = 0.1$. Figure \ref{pic:Ex11} shows the behavior of the log-likelihood and the log-posterior for $\theta$ using different values of $N$ and $\sigma$. Clearly, the accuracy of the estimated $\theta$ depends on the size of the dataset, $N$, and the reliability of measurement devices, $\sigma$.

\begin{figure}[h!]
	\begin{minipage}{.4\linewidth}
	\centering
		\includegraphics[scale=0.4]{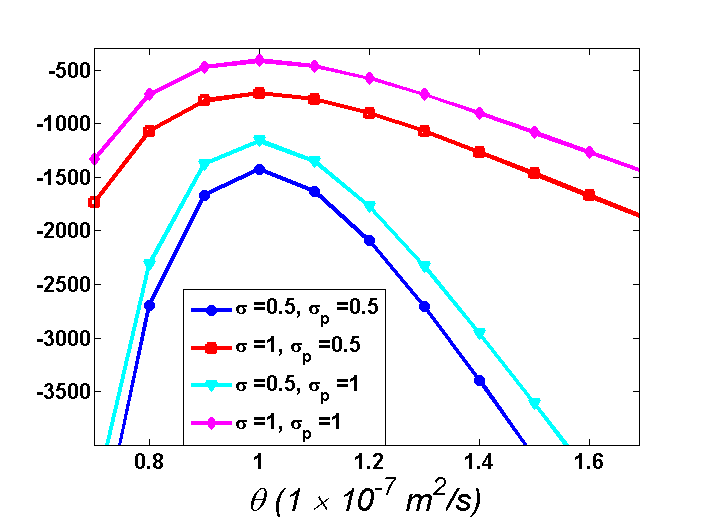}
		%\label{pic:loglik60example1}
	\end{minipage}
	\hspace{5em}
	\begin{minipage}{.4\linewidth}
	 \centering
		\includegraphics[scale=0.4]{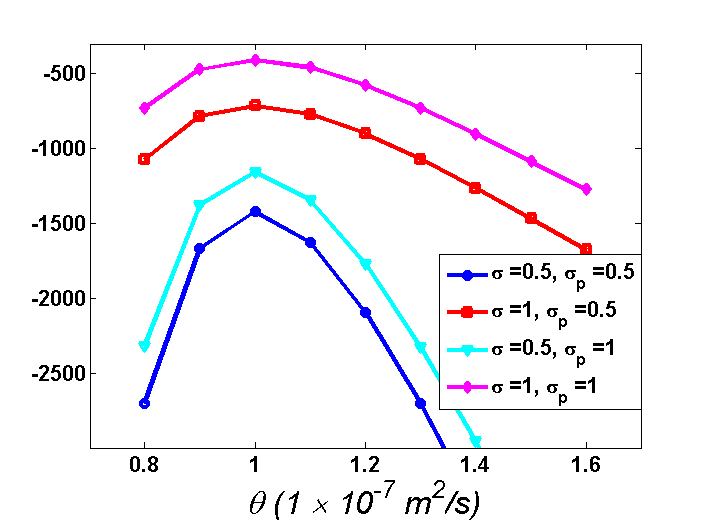}
		%\label{pic:logpost60example1}
	\end{minipage}
\caption{Example 2: Comparison between log-likelihoods (on the left) and log-posteriors (on the right) for $\theta$ using different values of $\sigma$ and $\sigma_p$, with $N = 60$.}
\label{pic:Ex12}
\end{figure}

Figure \ref{pic:Ex12} shows the relationship between the prior distribution of the boundary conditions and their measurements.  Although we notice different behaviors of the log-likelihood and the log-posterior, these functions exhibit the same argument of the maximum which is close to the true value of $\theta$.

\begin{figure}[h!]
\centering
\includegraphics[scale=0.55]{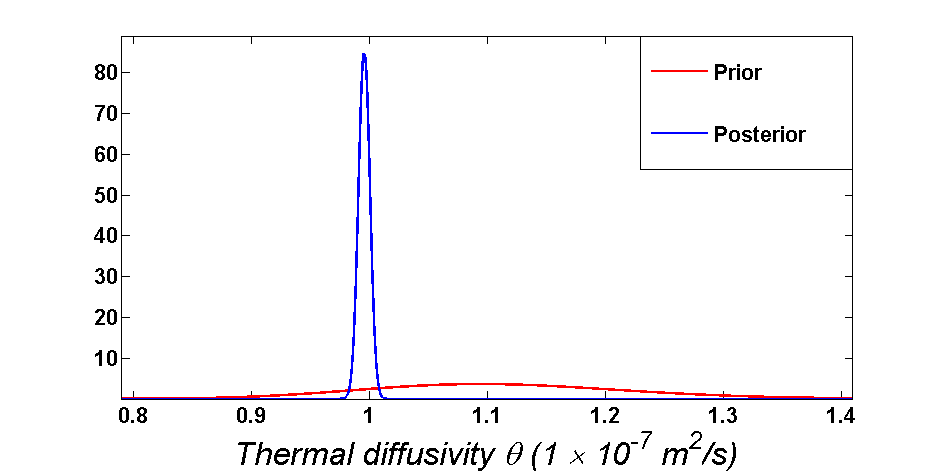}
\caption{Example 2: Lognormal prior and approximated Gaussian posterior densities for $\theta$ where $\sigma_p = \sigma = 0.5$ and $N = 60$.}
\label{pic:priorpost}
\end{figure}

Again, the Laplace approximation is used to derive the Gaussian posterior approximated density for $\theta$. The prior and posterior densities for $\theta$ are presented in Figure \ref{pic:priorpost} in which the Gaussian posterior, with mean $0.9955$ and standard deviation $0.0047$, is concentrated around the true value of $\theta$ unlike the very broad prior.

\subsubsection{Information Divergence and Expected Information Gain}

In the Bayesian setting that we adopted to infer the thermal diffusivity, $\theta$, the utility of the performed experiment, given an experimental setup $\xi$, can be conveniently measured by the so-called information divergence (or discrimination information as \cite{Kull2} called it), which is here defined as the Kullback-Leibler divergence (\cite{Kull}) between the prior density function $p(\theta)$ and the posterior density function of $\theta$, $\rho(\theta|\mathbf{Y_1}, \ldots, \mathbf{Y_N}, \xi)$:
\begin{equation}
D_{KL}(\mathbf{Y_1}, \ldots, \mathbf{Y_N}, \xi) := \int_{\Theta} \log \left( \frac{\rho(\theta| \mathbf{Y_1}, \ldots, \mathbf{Y_N}, \xi)}{p(\theta)} \right) \rho(\theta|\mathbf{Y_1}, \ldots, \mathbf{Y_N}, \xi) d\theta \,. \label{eq:kulldiv}
\end{equation}

The quantity in (\ref{eq:kulldiv}) is always non-negative; it is equal to zero when the prior and the posterior coincide; it provides a quantification of the relative discrimination between the prior and the posterior; and it depends on the observations $\mathbf{Y_1}, \ldots, \mathbf{Y_N}$. Therefore, given the synthetic dataset, A, we may introduce different experimental setups of interest by varying the interval time during which the temperature is measured. By choosing some specific thermocouples,we may evaluate the information divergence for any experimental setup.
Moreover, under the same generating process used for the dataset A, we may obtain as many synthetic datasets as needed to explore the properties of the proposed simulated experiment. The utility of such computer-based experiments can be adequately summarized by the so-called expected information gain (\cite{Quan}), which is defined as the marginalization of $D_{KL}$ over all possible simulated data:
\begin{equation} %E\left[D_{KL}\right]
 I(\xi) := \int_{\cal{Y}} \int_{\Theta} \log \left( \frac{\rho(\theta| \mathbf{Y_1}, \ldots, \mathbf{Y_N}, \xi)}{p(\theta)} \right)\rho(\theta|\mathbf{Y_1}, \ldots, \mathbf{Y_N}, \xi) d\theta \rho(\{\mathbf{Y_i} \}_{i=1}^{N} |\xi)d(\{\mathbf{Y_i}\}_{i=1}^{N}). \label{eq:kulldivave}
\end{equation}
This quantity (\ref{eq:kulldivave}) provides a criterion to determine which features of the setup, $\xi$, are, on average, most informative when inferring $\theta$. A larger value of $I(\xi)$ when, say, $\xi \in A$, suggests that, given the proposed statistical model, the inference on the unknown parameter will be more efficient, on average, when the features of the designed experiment take value in the set $A$.

Let us label as ${\textrm{TC1},\ldots,\textrm{TC7}}$ the thermocouples from the left boundary to the right boundary, respectively.
 
The numerical estimations of the information divergence for the synthetic dataset A and of the expected information gain, by using  \eqref{eq:postfin} to compute the approximated posterior in Example 1, are shown in Figures \ref{pic:infotime}, \ref{pic:infothermo} and \ref{pic:infotimethermo}, which refer to the following three experimental setups (es's):
\begin{itemize}
\item[es1)] $\xi$ consists of three non-overlapping time intervals, with the same length, which cover the entire observational period $[0,1]\,;$
\item[es2)] $\xi$ consists of the five inner thermocouples;
\item[es3)] $\xi$ is the combination of the two previous experimental setups, es1 and es2.
\end{itemize}

\begin{figure}[h!]
\centering
\includegraphics[scale=0.55]{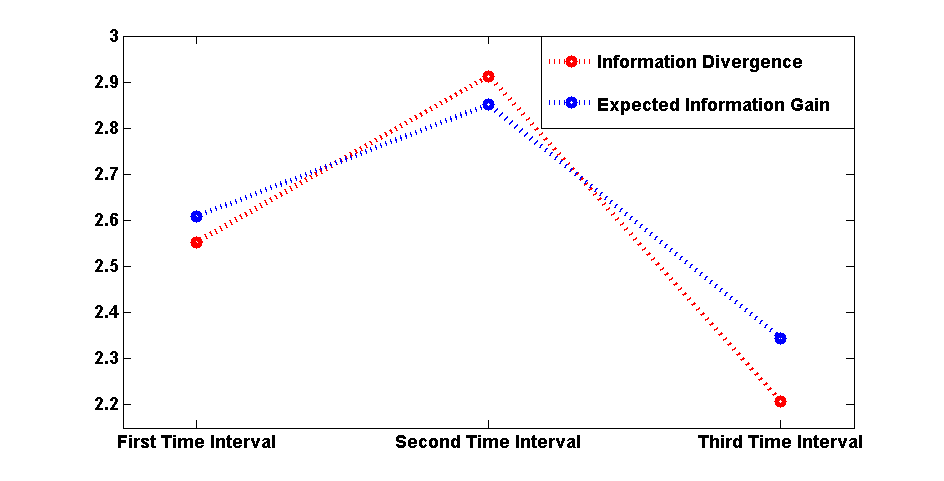}
\caption{Example 2: The expected information gain compared with the information divergence for the synthetic dataset, A, for the three time intervals experimental setup (es1).}
\label{pic:infotime}
\end{figure}

\begin{figure}[h!]
\centering
\includegraphics[scale=0.55]{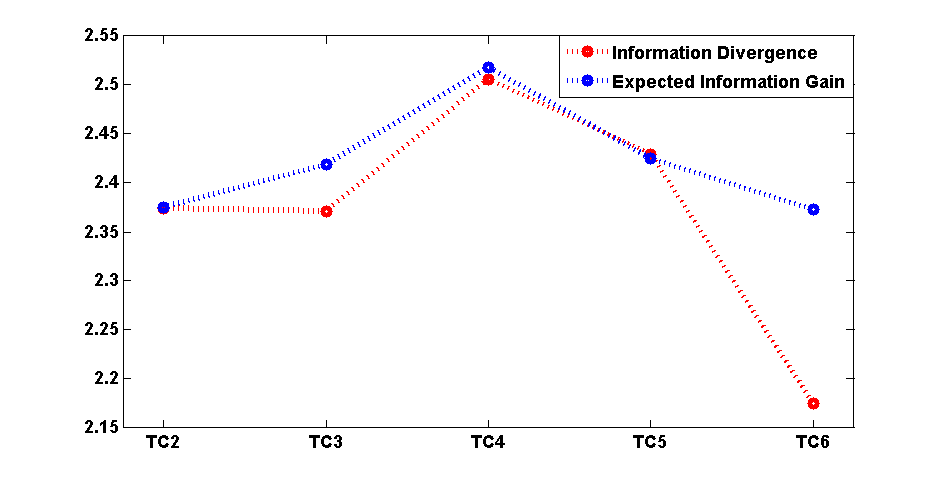}
\caption{Example 2: The expected information gain compared with the information divergence for the synthetic dataset, A, for the five inner thermocouples experimental setup (es2).}
\label{pic:infothermo}
\end{figure}

\begin{figure}[h!]
\centering
\includegraphics[scale=0.55]{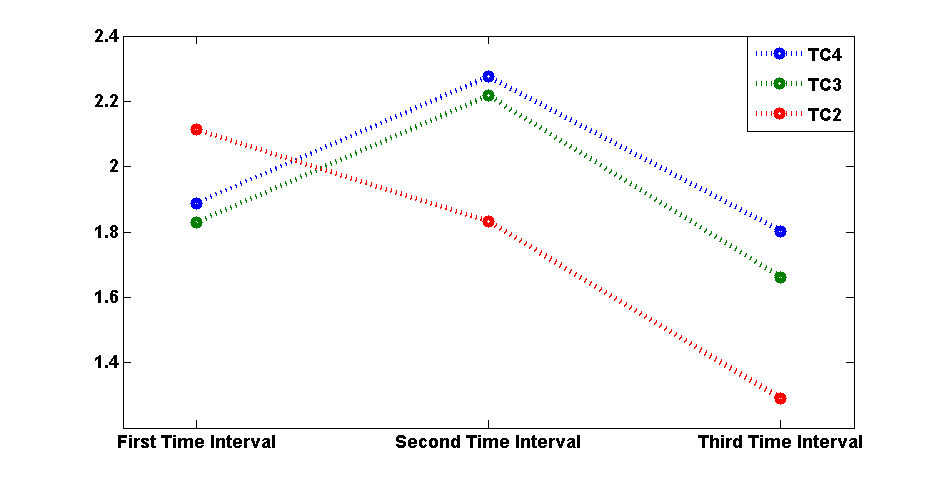}
\caption{Example 2: The expected information gain computed for the combination (es3) of the three time intervals and the five inner thermocouples experimental setups.}
\label{pic:infotimethermo}
\end{figure}

From the values in Figure \ref{pic:infotime}, which depicts the results from an experimental setup in which the temperature measurements are collected at different time intervals, we may conclude, by virtue of the interpretation of the expected information gain, that the second time interval is the most informative time interval from which to draw inferences on the thermal diffusivity, whereas the last time interval is the least informative one.

Figure \ref{pic:infothermo} summarizes how the expected information gain behaves for the five inner thermocouples experimental setup (es2). Given the synthetic dataset A, the sixth thermocouple (TC6) is the one where the information divergence takes the smallest value. However, when we look at the expected information gain, we may appreciate the nearly symmetric informative content of the thermocouples with respect to the central thermocouple (TC4) and how the expected gain about the thermal diffusivity becomes larger near the central thermocouple.

Finally, we look for the best combination of the two previous experimental setups, and the corresponding results are displayed in Figure \ref{pic:infotimethermo}.  We observe that the highest expected information gain is attained at the middle thermocouple (TC4) by using the information collected during the second time interval. Any indication provided by the numerical estimation of the expected information gain is very valuable to an experimentalist, since it suggests the most relevant features to be considered to build up an efficient experiment to infer the unknown parameters of the assumed statistical model.

\subsubsection{Predictive Posterior Distribution}

In this section, we examine the possibility of predicting the observable temperature at future time intervals after estimating the thermal diffusivity, $\theta$. More specifically, assume we have inferred $\theta$ using temperature measurements up to time $t_n$. Then, we want to predict the temperature in the next time step, $t_{n+1}$. Given our Bayesian model, it is necessary to assume the knowledge of the boundary temperature at time $t_{n+1}$. A typical situation could be given by an experiment in which there is interest in temperature values at inner points for different boundary values. 

The predictive posterior distribution, $\rho(\mathbf{Y_{n+1}} | \left\{ \mathbf{Y_k} \right\}_{k = 1}^{n}, T_{L,n+1}, T_{R,n+1} )$, is given by
\begin{equation}
\int_{\Theta} \rho(\mathbf{Y_{n+1}} | \left\{ \mathbf{Y_k} \right\}_{k = 1}^{n}, T_{L,n+1}, T_{R,n+1}, \theta ) \rho(\theta | \left\{ \mathbf{Y_k} \right\}_{k = 1}^{n}) \,d\theta \,,
\end{equation}
and it is estimated by averaging
%%\mathbf{E}(\mathbf{Y_{n+1}} | \left\{ \mathbf{Y_k} \right\}_{k = 1}%^{n}, T_{L,n+1}, T_{R,n+1} ) &=& \int_{\Theta} \mathbf{E}%(\mathbf{Y_{n+1}} | \left\{ \mathbf{Y_k} \right\}_{k = 1}^{n}, T_{L,n+1}, T_{R,n+1}, \theta ) \rho(\theta | \mathbf{Y_1}, \ldots, \mathbf{Y_N}) d\theta,\\
\begin{equation*}
\frac{1}{M} \sum_{i = 1}^{M} \rho(\mathbf{Y_{n+1}} | \left\{ \mathbf{Y_k} \right\}_{k = 1}^{n}, T_{L,n+1}, T_{R,n+1}, \theta_i)\,,
\end{equation*}
where the $\theta_i$'s are sampled from the posterior distribution of $\theta$. \\

Figure \ref{fig:ppm} shows the one-step-ahead predictive posterior densities at three different inner thermocouples based on the observations until time $t=0.5$, when the observed temperature at thermocouples $\textrm{TC2},\textrm{TC3}$ and $\textrm{TC4}$ were $53.98, 55.53$ and $57.84\,^{\circ}\mathrm{C}$ respectively.
 
\begin{figure}[h!]
\centering
\includegraphics[scale=0.55]{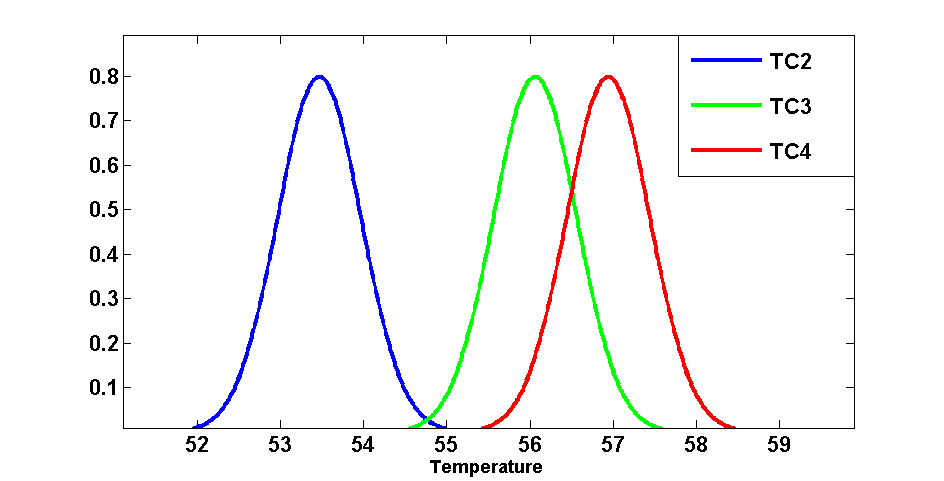}
\caption{Example 2: The predictive posterior densities of the observable temperatures for the thermocouples $\textrm{TC2}, \textrm{TC3}, \textrm{TC4}$ at time $t=0.52$.}
\label{fig:ppm}
\end{figure}

We emphasize that this methodology allows us to obtain the k-step ahead predictive posterior density for any inner thermocouple, assuming boundary conditions subject to uncertainty that are adequate for the experiment.

\subsection{Example 3} \label{sectionfivethree}

In this example, we consider the case where the thermal diffusivity depends on the space variable, $x$. The finite element method used to solve the heat equation under such an assumption was presented in Section \ref{sec3}.

\subsubsection{Prior distribution of $\theta(x)$}
Assume that the prior distribution of $\theta(x)$ is a lognormal random field with a squared exponential (SE) covariance function. Then, the prior distribution of $\log \theta(x)$ can be expressed using the joint multivariate Gaussian distribution:
\begin{equation}
\left( \log(\theta(x_1)), \ldots, \log(\theta(x_s)) \right) \sim \mathcal{N}_s \left( \boldsymbol{\mu}, K \right), \label{priormulti}
\end{equation}
where $\boldsymbol{\mu} = (\mu,\mu, ..., \mu)^{tr}$, $K_{i j} = Cov(\log(\theta(x_i)), \log(\theta(x_j))) = \eta^2 \exp \left(- \frac{\vert x_i - x_j \vert^2}{2\ell} \right)\,,i,j=1,\ldots,s$, $\eta$ is the magnitude, and $\ell$ denotes the length scale.

We assume the following priors for the hyperparameters $\mu, \eta$ and $\ell$ (the prior density of $\mu$ and $\eta$ is displayed in Figure \ref{pic:prior2D1}):
\[ \mu \sim \mathcal{N} \left( 0.1, 0.1 \right), \quad
\eta \sim \textrm{half-Cauchy}\left( 0.1 \right), \quad
\ell \sim U \left( 0.5, 5 \right). \]
In this example, we choose a Gaussian prior for $\mu$ and uninformative uniform prior for $\ell$. The half-Cauchy prior for $\eta$ was chosen because it is a practical prior for scale parameters in hierarchical models \citep{polson} \citep{gelman}.  

\begin{figure}[h!]
\centering
\includegraphics[scale=0.35]{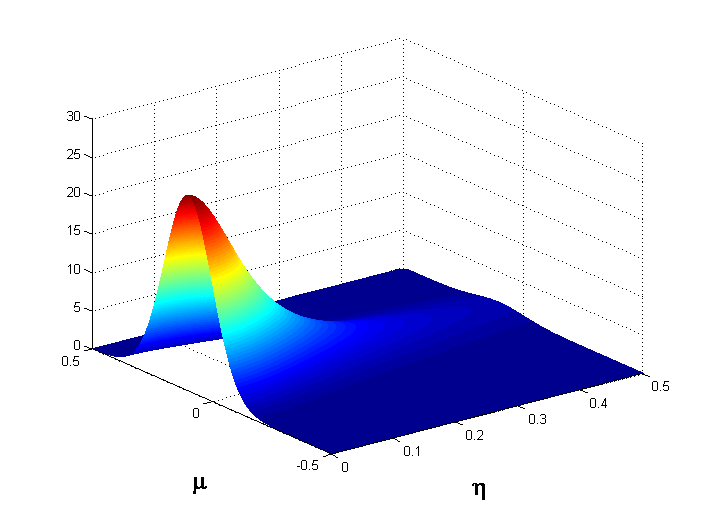}
\caption{Example 3: Joint prior density for the hyperparameters $\mu$ and $\eta$, with $\mu \sim \mathcal{N} \left( 0.1, 0.1 \right)$ and $\eta \sim \textrm{half-Cauchy}\left( 0.1 \right)$.}
\label{pic:prior2D1}
\end{figure}

\subsubsection{Joint posterior distribution of the hyperparameters $\mu,\eta, \ell$}
Given $\boldsymbol{\theta} := \left( \theta(x_1), \ldots, \theta(x_s) \right)^{tr}$, let us consider the joint posterior density of the hyperparameters $(\mu,\eta, \ell)$ that characterize the distribution of $\log \theta(x)$.
\begin{eqnarray*}
\rho(\mu , \eta, \ell |  \mathbf{Y_1}, \ldots, \mathbf{Y_N}) \propto \rho(\mu ,\eta, \ell) \int_{\boldsymbol{\Theta}} \rho(\boldsymbol{\theta}|\mu , \eta, \ell) \rho( \mathbf{Y_1}, \ldots, \mathbf{Y_N}|\boldsymbol{\theta}) d\boldsymbol{\theta}.
\end{eqnarray*}
Let us introduce the auxiliary variable $\mathbf{z} = (z_1, \dots, z_s) \sim \mathcal{N}_s \left( \mathbf{0}, C = \frac{1}{\eta^2} K \right)$ and consider the change of variables transformation: $\log(\theta_i) = \mu + \eta z_i$ where $\theta_i:=\theta(x_i)$, $z_i := z(x_i)$, $i=1,\ldots,s$. Then, the prior density of $\boldsymbol{\theta}$ is given by
\begin{eqnarray*}
\rho(\boldsymbol{\theta}|\mu , \eta, \ell) &=& \frac{(\eta^2 2\pi)^{-\frac{s}{2}} |C|^{-\frac{1}{2}}}{\theta_1 \,\theta_2 \cdots \theta_s} \exp{\left( -\frac{(\log{\boldsymbol{\theta}} - \boldsymbol{\mu})^{tr} C (\log{\boldsymbol{\theta}} - \boldsymbol{\mu})}{2 \eta^2} \right)} \\
&=& \frac{(2\pi)^{-\frac{s}{2}} |C|^{-\frac{1}{2}}}{\eta^s e^{s\mu+\eta(z_1+ \ldots +z_s)}} \exp{\left(-\frac{1}{2}\mathbf{z}^{tr} C^{-1} \mathbf{z} \right)} \\
&=& \frac{\rho(\mathbf{z}|\ell)}{\eta^s e^{s\mu+\eta(z_1+ \ldots +z_s)}}.
\end{eqnarray*}
The posterior density of the hyperparameters can be therefore written as
\begin{eqnarray*}
\rho(\mu , \ell, \eta |  \mathbf{Y_1}, \ldots, \mathbf{Y_N}) \propto \rho(\mu, \ell, \eta) \int_{\mathbf{Z}} \frac{\rho(\mathbf{z}|\ell)}{\eta^s e^{s\mu+\eta(z_1+ \ldots +z_s)}} \,|J|\, \rho( \mathbf{Y_1}, \ldots, \mathbf{Y_N}|\mu, \eta, \mathbf{z}) d\mathbf{z},
\end{eqnarray*}
where $J$ is the Jacobian matrix of the transformation.

By considering $\ell$ as a nuisance parameter, we obtain 
\begin{equation}
\label{eq:hpost}
\rho(\mu , \eta |  \mathbf{Y_1}, \ldots, \mathbf{Y_N}) \propto \rho(\mu, \eta) \int_{\ell} \rho(\ell) \int_{\mathbf{Z}} \rho(\mathbf{z}|\ell) \rho( \mathbf{Y_1}, \ldots, \mathbf{Y_N}|\mu, \eta, \mathbf{z}) d\mathbf{z} d\ell
\end{equation}
after $\ell$ is marginalized.

\begin{figure}[h!]
\centering
\includegraphics[scale=0.35]{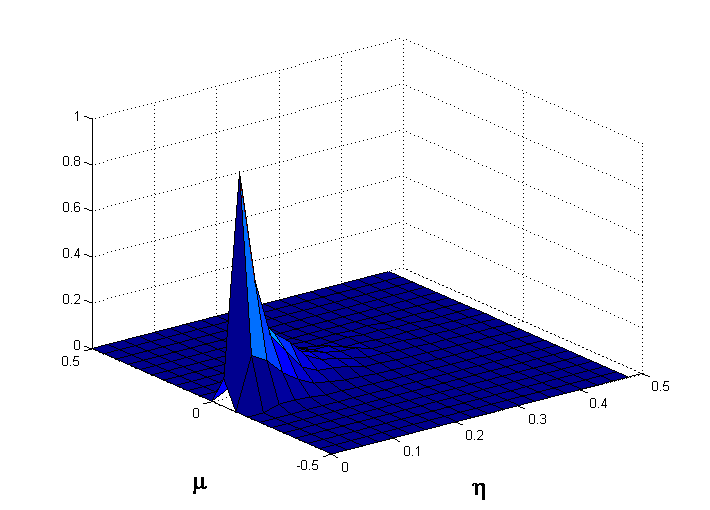}
\caption{Example 3: Non-normalized joint posterior density of the hyperparameters $\mu$ and $\eta$. The maximum a posteriori probability (MAP) estimate is $(-0.05,0.025)$.}
\label{pic:post2D1}
\end{figure}

\begin{figure}[h!]
\centering
\includegraphics[scale=0.35]{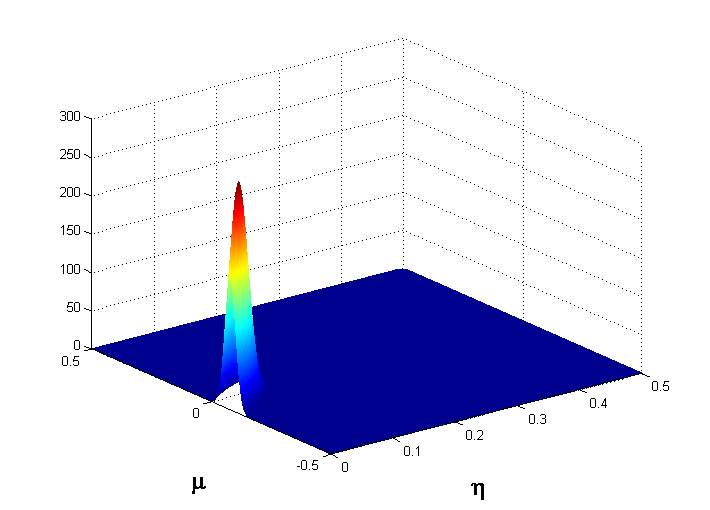}
\caption{Example 3: Laplace's approximation of the posterior density of the hyperparameters $\mu$ and $\eta$.}
\label{pic:laplace2D1}
\end{figure}

To evaluate the posterior distribution of the hyperparametrs, we need to compute the $s+1$ dimensional integral in formula \eqref{eq:hpost}. Alternatively, Monte Carlo method can be used to approximate these integrations. First, we sample $\ell$ from its prior distribution, $\rho(\ell)$. Then, given $\ell$ we can sample $\mathbf{z}$ and evaluate the joint likelihood function, $\rho( \mathbf{Y_1}, \ldots, \mathbf{Y_N}|\mu, \eta, \mathbf{z})$, for any pair $(\mu,\eta)$. Therefore, we approximate the non-normalized posterior distribution using a double sum as follows:
\begin{eqnarray*}
\int_{\ell} \rho(\ell) \int_{\mathbf{Z}} \rho(\mathbf{z}|\ell) \rho( \mathbf{Y_1}, \ldots, \mathbf{Y_N}|\mu, \eta, \mathbf{z}) d\mathbf{z} d\ell
&\approx& \frac{1}{M_\ell}\sum_{i=1}^{M_\ell} \int_{\mathbf{Z}} \rho(\mathbf{z}|\ell_{i}) \rho( \mathbf{Y_1}, \ldots, \mathbf{Y_N}|\mu, \eta, \mathbf{z}) d\mathbf{z}\\ 
&\approx&  \frac{1}{M_\ell}  \frac{1}{M_z} \sum_{i=1}^{M_\ell} \sum_{j=1}^{M_z} \rho( \mathbf{Y_1}, \ldots, \mathbf{Y_N}|\mu, \eta, \mathbf{z}_{j}),
\end{eqnarray*}
where $\mathbf{z}_{j} \sim \rho(\mathbf{z}|\ell_{i})$.

Figure \ref{pic:post2D1} shows that the non-normalized posterior of the hyperparameters $\mu$ and $\eta$ has a unique mode at $(-0.05,0.025)$. We use then Laplace's method to obtain a Gaussian posterior, using the synthetic dataset A, as shown in Figure \ref{pic:laplace2D1}.

A new dataset is now introduced to test our method when $\theta$ depends on $x$.

\textbf{Dataset B}\\
To analyze the performance of our inferential technique in the case where the thermal diffusivity parameter depends on the space variable, $x$, we consider another synthetic dataset (hereafter named dataset B) that is generated similarly to the dataset A, except for the fact that $\theta(x)$ is sampled randomly from the new prior \eqref{priormulti} where $\mu=0, \eta = 0.1$ and $\ell = 5$.

Again, we approximate the posterior distribution of the hyperparameters $\mu$ and $\eta$ using Laplace's approximation given the following priors for the hyperparameters $\mu, \eta$ and $\ell$:
\[ \mu \sim \mathcal{N} \left( 0, 0.25 \right), \quad
\eta \sim \textrm{half-Cauchy}\left( 0.5 \right), \quad
\ell \sim U \left(4, 6 \right), \]
where we assume broad priors for $\mu$ and $\eta$ with a more informative unifrom prior for $\ell$.

\begin{figure}[h!]
\centering
\includegraphics[scale=0.35]{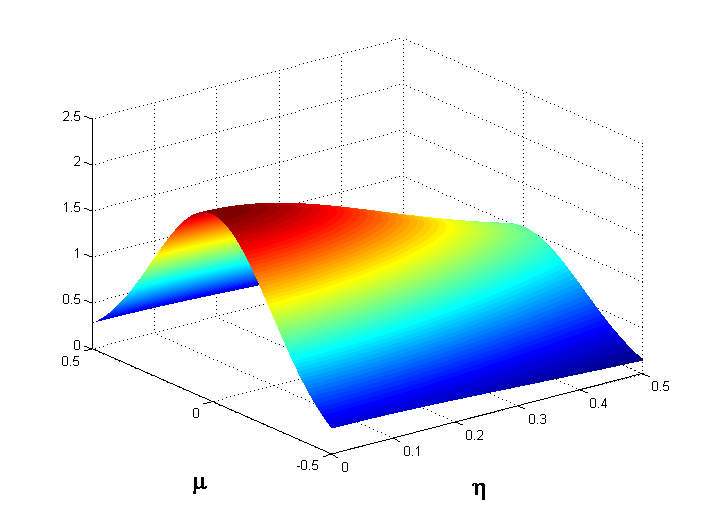}
\caption{Example 3: Joint prior density for the hyperparameters $\mu$ and $\eta$, with $\mu \sim \mathcal{N} \left( 0, 0.25 \right)$ and $\eta \sim \textrm{half-Cauchy}\left( 0.5 \right)$.}
\label{pic:prior2D2}
\end{figure}

\begin{figure}[h!]
\centering
\includegraphics[scale=0.35]{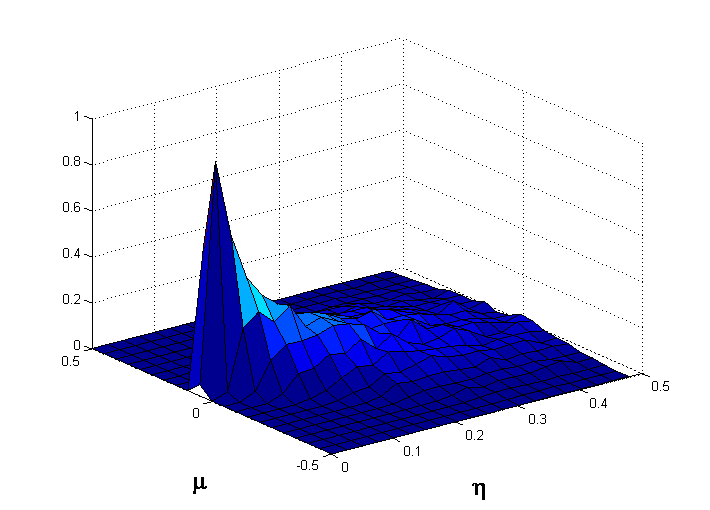}
\caption{Example 3: Non-normalized joint posterior density of the hyperparameters $\mu$ and $\eta$. The maximum a posteriori probability (MAP) estimate is $(0.05,0.025)$.}
\label{pic:post2D2}
\end{figure}

\begin{figure}[h!]
\centering
\includegraphics[scale=0.35]{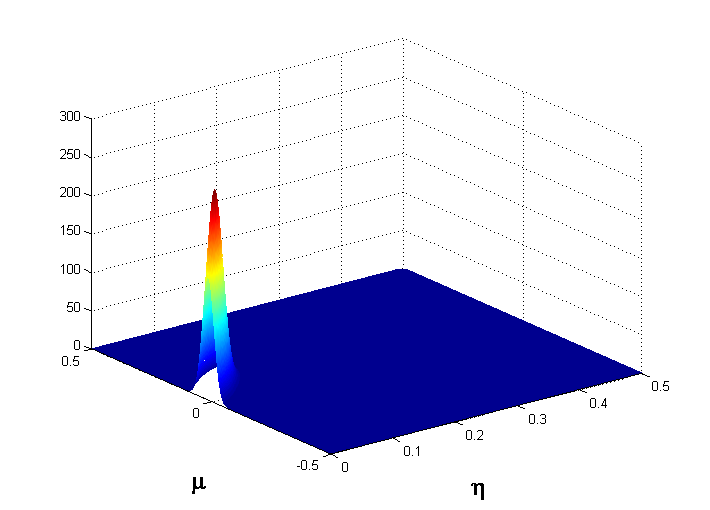}
\caption{Example 3: Laplace's approximation for the posterior density of the hyperparameters $\mu$ and $\eta$.}
\label{pic:laplace2D2}
\end{figure}

From Figure \ref{pic:post2D2}, we find that the maximum a posteriori probability (MAP) estimate is $(0.05,0.025)$ and Laplace's approximation can be used.
By comparing the prior and posterior densities for $\mu$ and $\eta$ in Figures \ref{pic:prior2D2} and \ref{pic:laplace2D2}, we can say that the experiment is informative since the posterior concentrates around $(0.05,0.025)$ which is close to the true value.

\section{Conclusion}
In this work, we developed a general Bayesian approach for one-dimensional linear parabolic partial differential equations with noisy boundary conditions. First, we derived the joint likelihood of the thermal diffusivity $\theta$ and the boundary parameters. Second, we approximated the solution of the forward problem, by showing that such solution can be written as a linear function of the boundary conditions. After that, we marginalized out the boundary parameters, under the assumptions that they are well approximated by piecewise linear functions and that they are independent Gaussian random vectors. This approach can be generalized to any well-posed linear partial differential equation.

On the implementation side, we computed the log-posterior of the thermal diffusivity in different cases. Besides, we used the Laplace approximation to obtain a Gaussian posterior. In the first example, we used directly the joint likelihood of the thermal diffusivity $\theta$ and the boundary parameters, assuming that the boundary conditions were known. In the second example, we used the marginalized likelihood of $\theta$, assuming that $\theta$ is a lognormal random variable and, as in the previous example, we obtained an approximated Gaussian posterior distribution, showing that the unknown value of the thermal diffusivity is recovered almost exactly. Moreover we explored two important advantages of using the Bayesian approach, by providing the estimation of the expected information gain for different experimental setups and the predictive posterior distribution of the temperature. We noticed that the temperature measurements from the middle thermocouple at the second time interval are, in general, the most informative measurements.  Finally, we considered the case where $\theta$ is a lognormal random field with squared exponential covariance function. In this case, we obtained the joint posterior distribution for the covariance hyperparameters by applying hierarchical Bayesian techniques.

\bibliographystyle{ba}
\bibliography{reference}

\begin{acknowledgement}
Part of this work was carried out while F. Ruggeri and M. Scavino were Visiting Professors at KAUST. Z. Sawlan, M. Scavino and R. Tempone are members of the KAUST SRI Center for Uncertainty Quantification in Computational Science and Engineering.
\end{acknowledgement}

\appendix
\section*{Appendix A}
\label{appA}
Proof of Theorem \ref{th:linear}:
%The weak formulation of $v$ can be simplified by considering the following system of ODEs:
%\begin{equation*}
%\int_{x_L}^{x_R} \left(\partial_t v \phi_{k} + a(x) \nabla v \cdot \phi'_{k} + b(x) \nabla v \phi_{k} + c(x) v \phi_{k} \right) dx = \int_{x_L}^{x_R}f(x,t) \phi_{k} dx,\, k = 1,..., I-1. \label{VF}
%\end{equation*}
%where $\partial_t v(x,t)$ denotes the discretization $ \frac{v_{n+1}(x) - v_{n}(x)}{\Delta t}\,.$ \\
%
%
%Using the Backward Euler method, the equation \eqref{VF} can be written as:
%\begin{eqnarray*}
%&& \int_{x_L}^{x_R} v_{n+1} \phi_{k} dx - \int_{x_L}^{x_R} v_{n} \phi_{k} dx
%+ \Delta t \int_{x_L}^{x_R} \left( a(x) \nabla v \cdot \phi'_{k} + b(x) \nabla v \phi_{k} + c(x) v \phi_{k} \right)dx \\
%&& = - \int_{x_L}^{x_R} \left( T_{L,n+1} \frac{x_R - x}{x_R - x_L} + T_{L,n} \frac{x_R - x}{x_R - x_L} - \Delta t T_{L,n+1} L_{\boldsymbol{\theta}} \frac{x_R - x}{x_R - x_L} \right)\phi_{k} dx\\
%&& - \int_{x_L}^{x_R} \left( T_{R,n+1} \frac{x-x_L}{x_R - x_L} + T_{R,n}\frac{x-x_L}{x_R - x_L} - \Delta t T_{R,n+1} L_{\boldsymbol{\theta}} \frac{x-x_L}{x_R - x_L} \right)\phi_{k} dx.
%\end{eqnarray*}

First, let us introduce the vectors
\begin{eqnarray*}
\mathop{F_{L,1}}\limits_{(I-1) \times 1} &=& \left[ - \left(\int_{x_L}^{x_R} \frac{x_R - x}{x_R - x_L}\phi_{j} dx \right)_{j} \right], \\
\mathop{F_{L,2}}\limits_{(I-1) \times 1} &=& \left[ \left(\int_{x_L}^{x_R} \left( \frac{x_R - x}{x_R - x_L} - \Delta t L_{\boldsymbol{\theta}} \frac{x_R - x}{x_R - x_L} \right)\phi_{j} dx \right)_{j} \right], \\
\mathop{F_{R,1}}\limits_{(I-1) \times 1} &=& \left[ - \left(\int_{x_L}^{x_R} \frac{x-x_L}{x_R - x_L}\phi_{j} dx \right)_{j} \right], \\
\mathop{F_{R,2}}\limits_{(I-1) \times 1} &=& \left[ \left(\int_{x_L}^{x_R} \left( \frac{x-x_L}{x_R - x_L} - \Delta t L_{\boldsymbol{\theta}} \frac{x-x_L}{x_R - x_L} \right)\phi_{j} dx \right)_{j} \right],
\end{eqnarray*}

and the matrices
\begin{eqnarray*}
%\mathop{S_{\boldsymbol{\theta}}}\limits_{(I-1) \times (I-1)} &=& \left[ \left( \int_{x_L}^{x_R} (a(x) \phi'_{i} \phi'_{j} + b(x) \phi_{i} \phi'_{j} + c(x) \phi_{i} \phi_{j}) dx \right)_{ij} \right], \\
%\mathop{M}\limits_{(I-1) \times (I-1)} &=& \left[ \left( \int_{x_L}^{x_R} \phi_{i} \phi_{j} dx \right)_{ij} \right] = 
%\frac{\Delta x}{6} \left( \begin{array}{cccccc}
%4 & 1 & 0 & 0 &  \ldots & 0 \\
%1 & 4 & 1 & 0 & \ldots & 0 \\
%0 & 1 & 4 & 1 & \ldots & 0 \\
%\vdots & \ddots & \ddots & \ddots & \ddots & \vdots \\
%0 & \ldots & 0 & 1 & 4 & 1\\
%0 & 0 & 0 & \ldots & 1 & 4
%\end{array} \right), \\
\mathop{\mathbf{B}}\limits_{(I-1) \times (I-1)} &=& \left( M + \Delta t S_{\boldsymbol{\theta}} \right)^{-1} M,\\
\mathop{\mathbf{F}_{L,n}}\limits_{(I-1) \times N+1} &=&  \left(
\begin{array}{cccc}
\mathop{\mathbf{0}}\limits_{(I-1) \times (n-1)} &
\mathop{F_{L,1}}\limits_{(I-1) \times 1} &
\mathop{F_{L,2}}\limits_{(I-1) \times 1} &
\mathop{\mathbf{0}}\limits_{(I-1) \times (N-n)}
\end{array} \right), \: n = 2,\ldots , N-1, \\
\mathop{\mathbf{F}_{R,n}}\limits_{(I-1) \times N+1} &=&  \left(
\begin{array}{cccc}
\mathop{\mathbf{0}}\limits_{(I-1) \times (n-1)} &
\mathop{F_{R,1}}\limits_{(I-1) \times 1} &
\mathop{F_{R,2}}\limits_{(I-1) \times 1} &
\mathop{\mathbf{0}}\limits_{(I-1) \times (N-n)}
\end{array} \right), \: n = 2,\ldots , N-1, \\
\mathop{\mathbf{F}_{L,1}}\limits_{(I-1) \times N+1} &=& \left(
\begin{array}{ccc}
\mathop{F_{L,1}}\limits_{(I-1) \times 1} &
\mathop{F_{L,2}}\limits_{(I-1) \times 1} &
\mathop{\mathbf{0}}\limits_{(I-1) \times (N-1)}
\end{array} \right), \\
\mathop{\mathbf{F}_{R,1}}\limits_{(I-1) \times N+1} &=& \left(
\begin{array}{ccc}
\mathop{F_{R,1}}\limits_{(I-1) \times 1} &
\mathop{F_{R,2}}\limits_{(I-1) \times 1} &
\mathop{\mathbf{0}}\limits_{(I-1) \times (N-1)}
\end{array} \right), \\
\mathop{\mathbf{F}_{L,N}}\limits_{(I-1) \times N+1} &=& \left(
\begin{array}{ccc}
\mathop{\mathbf{0}}\limits_{(I-1) \times (N-1)} &
\mathop{F_{L,1}}\limits_{(I-1) \times 1} &
\mathop{F_{L,2}}\limits_{(I-1) \times 1} 
\end{array} \right), \\
\mathop{\mathbf{F}_{R,N}}\limits_{(I-1) \times N+1} &=& \left(
\begin{array}{ccc}
\mathop{\mathbf{0}}\limits_{(I-1) \times (N-1)} &
\mathop{F_{R,1}}\limits_{(I-1) \times 1} &
\mathop{F_{R,2}}\limits_{(I-1) \times 1} 
\end{array} \right).
\end{eqnarray*}

Then, the solution of \eqref{eq:hom} is given by:
\begin{equation*}
\mathbf{u}_{n+1} = B \mathbf{u}_{n} + \left( M + \Delta t S_{\boldsymbol{\theta}} \right)^{-1} \left( \mathbf{F}_{L,n}\mathbf{T}_{L} + \mathbf{F}_{R,n}\mathbf{T}_{R} \right).
\end{equation*}

Applying recursively the previous relation we derive the discrete representation (Duhamel's formula)
\begin{equation*}
\mathbf{u}_{n} = \mathbf{B}^n \mathbf{u}_0 + \sum_{k=1}^{n} \mathbf{B}^{n-k} \left( M + \Delta t S_{\boldsymbol{\theta}} \right)^{-1} \left( \mathbf{F}_{L,k}\mathbf{T}_{L} + \mathbf{F}_{R,k}\mathbf{T}_{R} \right).
\end{equation*}

Now we can build the matrices $A_n(\boldsymbol{\theta}),\tilde{A}_{L,n}(\boldsymbol{\theta})$ and $\tilde{A}_{R,n}(\boldsymbol{\theta}) \,, n=1,\ldots,N\,,$ introduced in the expression \eqref{eq:linear}, to recover the solution of the problem \eqref{eq:hom} as a linear function of the initial-boundary conditions, namely:
\begin{eqnarray*}
A_{n}(\boldsymbol{\theta}) &=& \mathbf{B}^{n},\\
\tilde{A}_{L,n}(\boldsymbol{\theta}) &=& \sum_{k=1}^{n} \mathbf{B}^{n-k} \left( M + \Delta t S_{\boldsymbol{\theta}} \right)^{-1} \mathbf{F}_{L,k},\:\textrm{and}\\
\tilde{A}_{R,n}(\boldsymbol{\theta}) &=& \sum_{k=1}^{n} \mathbf{B}^{n-k} \left( M + \Delta t S_{\boldsymbol{\theta}} \right)^{-1} \mathbf{F}_{R,k}. \qed
\end{eqnarray*}

\section*{Appendix B}
\label{appB}
Proof of Theorem \ref{th:linear2}:

From \eqref{eq:lemma} and \eqref{eq:linear}, we can write:
\begin{equation*}
\mathbf{T}_{n} = \mathbf{B}^{n} \mathbf{u}_0 + \tilde{A}_{L,n}(\boldsymbol{\theta}) {\mathbf{T}}_L + 
\tilde{A}_{R,n}(\boldsymbol{\theta}) {\mathbf{T}}_{R} - T_{L,n}F_{L,1} -T_{R,n}F_{R,1},
\end{equation*}
Now, define $A_{L,n}(\boldsymbol{\theta})$ and $A_{R,n}(\boldsymbol{\theta})$ by:
\begin{eqnarray*}
A_{L,n}(\boldsymbol{\theta}) &=& \tilde{A}_{L,n}(\boldsymbol{\theta}) + \left(
\begin{array}{cccc}
\mathop{\mathbf{B}^{n} F_{L,1}}\limits_{(I-1) \times 1} &
\mathop{\mathbf{0}}\limits_{(I-1) \times (n-1)} &
\mathop{-F_{L,1}}\limits_{(I-1) \times 1} &
\mathop{\mathbf{0}}\limits_{(I-1) \times (N-n)}
\end{array} \right),\\
A_{R,n}(\boldsymbol{\theta}) &=& \tilde{A}_{R,n}(\boldsymbol{\theta}) + \left(
\begin{array}{cccc}
\mathop{\mathbf{B}^{n} F_{R,1}}\limits_{(I-1) \times 1} &
\mathop{\mathbf{0}}\limits_{(I-1) \times (n-1)} &
\mathop{-F_{R,1}}\limits_{(I-1) \times 1} &
\mathop{\mathbf{0}}\limits_{(I-1) \times (N-n)}
\end{array} \right).
\end{eqnarray*}
Therefore, we obtain equation \eqref{eq:mainsol}.
\qed

\section*{Appendix C}
\label{appC}
Proof of Remark \ref{re:ex2}:

Consider the following backward Euler discretization of the local problem \eqref{locprob} in the interval time, $(t_n, t_{n+1}) = (n \Delta t, (n+1) \Delta t)$:
\begin{equation}
 \left\{
\begin{array}{rl}
\frac{1}{\Delta t} (T_{i, n+1} - & T_{i,n}) - \frac{\theta}{\Delta x ^2} (T_{i+1, n+1} -2 T_{i, n+1} + T_{i-1, n+1}) = 0,  \:\:i = 2,\ldots,I \\
T_{L,n} =  & T_{L}(n \Delta t)\,, \\
T_{R,n} =  & T_{R}(n \Delta t)\,.
\end{array} \right.
\label{backeuler}
\end{equation}
To write the discretization \eqref{backeuler} in a matrix form, let us introduce the vectors
$\mathop{\mathbf{T}_{n}}\limits_{(I-1) \times 1} = (T_{2,n}, \ldots, T_{I,n})^{tr}$, $n= 1, \ldots, N$, and the matrix
\begin{equation*}
\mathop{\mathbf{A}}\limits_{(I-1) \times (I-1)} = \left(
\begin{array}{cccccc}
-2 & 1 & 0 & 0 &  \ldots & 0 \\
1 & -2 & 1 & 0 & \ldots & 0 \\
0 & 1 & -2 & 1 & \ldots & 0 \\
\vdots & \ddots & \ddots & \ddots & \ddots & \vdots \\
0 & \ldots & 0 & 1 & -2 & 1\\
0 & 0 & 0 & \ldots & 1 & -2
\end{array} \right) \,.
\end{equation*}

In this way, we may write
\begin{equation} \frac{1}{\Delta t} (\mathbf{T}_{n+1} - \mathbf{T}_{n}) - \frac{\theta}{\Delta x^2} \mathbf{A} \mathbf{T}_{n+1}
= \frac{\theta}{\Delta x^2} (T_{L,n+1} \mathbf{v} + T_{R,n+1} \mathbf{w}) \,, \label{eulermat} \end{equation}
where $\mathop{\mathbf{v}}\limits_{(I-1) \times 1} = (1, 0, \ldots, 0)^{tr}$ and $\mathop{\mathbf{w}}\limits_{(I-1) \times 1} = (0, \ldots, 0, 1)^{tr}\,.$ \\
The expression \eqref{eulermat} is equal to
\[ (I_{I-1} - \theta \frac{\Delta t}{\Delta x^2} \mathbf{A} ) \mathbf{T}_{n+1} = \mathbf{T}_{n} + \theta \frac{\Delta t}{\Delta x^2}
(T_{L,n+1} \mathbf{v} + T_{R,n+1} \mathbf{w})  \]
and letting $\frac{\Delta t}{\Delta x^2} = \lambda$ and $\mathbf{B} = (I_{I-1} - \theta \lambda \mathbf{A})^{-1}$, we obtain
\[ \mathbf{T}_{n+1} = \mathbf{B} \mathbf{T}_{n} + \theta \lambda ( T_{L,n+1} \mathbf{B} \mathbf{v} + T_{R,n+1} \mathbf{B} \mathbf{w} ) \,.   \]

Applying recursively the previous relation, we derive
\[ \mathbf{T}_{n} = \mathbf{B}^n \mathbf{T}_0 + \theta \lambda \sum_{k=1}^{n} T_{L,k} \mathbf{B}^{n-k+1} \mathbf{v} + \theta \lambda \sum_{k=1}^{n} T_{R,k} \mathbf{B}^{n-k+1} \mathbf{w} \,, \]
whose compact matrix form is
\[ \mathbf{T}_{n} = \mathbf{B}^n \mathbf{T}_0 + \mathbf{C}_n \tilde{\mathbf{T}}_{L} + \mathbf{D}_n \tilde{\mathbf{T}}_{R} \,,\]
where
\begin{itemize}
\item $\mathop{\tilde{\mathbf{T}}_L}\limits_{n \times 1} = (T_{L,1}, \ldots, T_{L,n})^{tr} = (T_L(\Delta t), \ldots, T_L(n \Delta t))^{tr}\,,$
\item $\mathop{\tilde{\mathbf{T}}_{R}}\limits_{n \times 1} = (T_{R,1}, \ldots, T_{R,n})^{tr} = (T_R(\Delta t), \ldots, TR( n \Delta t))^{tr}\,,$
\item the matrix $\mathop{\mathbf{C}_n}\limits_{(I-1) \times n}$ has column vectors $ \mathbf{c}_k = \theta \lambda \mathbf{A}^{n-k+1} \mathbf{v} \,, \: k=1,\ldots,n \,,$
\item the matrix $\mathop{\mathbf{D}_n}\limits_{(I-1) \times n}$ has column vectors $ \mathbf{d}_k = \theta \lambda \mathbf{A}^{n-k+1} \mathbf{w} \,, \: k=1,\ldots,n \,.$
\end{itemize}
Now, we can build the matrices $ \mathop{A_{L,n}(\theta)}\limits_{(I-1) \times N}$ and $ \mathop{A_{R,n}(\theta)}\limits_{(I-1) \times N} \,, n=1,\ldots,N\,,$ introduced in the expression \eqref{eq:res}, to recover the solution of the problem \eqref{locprob} for each interval time, $(t_{n-1}, t_n)\,,n=1, \ldots, N\,,$ as a linear function of the initial-boundary conditions:
\begin{equation*}
\mathop{A_{L,n}(\theta)}\limits_{(I-1) \times N} =  \left(
\begin{array}{cc}
\mathop{\mathbf{C}_n}\limits_{(I-1) \times n} & \mathop{\mathbf{0}}\limits_{(I-1) \times (N-n)}
\end{array} \right) \,,
\end{equation*}
\begin{equation*}
\mathop{A_{R,n}(\theta)}\limits_{(I-1) \times N} =  \left(
\begin{array}{cc}
\mathop{\mathbf{C}_n}\limits_{(I-1) \times n} & \mathop{\mathbf{0}}\limits_{(I-1) \times (N-n)}
\end{array} \right) \,.
\end{equation*} \qed

\end{document}